\documentclass[a4paper,reqno, 11pt]{amsart}
\usepackage{fullpage}

\usepackage{graphicx,color,marginnote}
\usepackage{amsmath}
\usepackage[utf8]{inputenc}
\usepackage[T1]{fontenc}
\usepackage{cite}
\usepackage[english]{babel}
\usepackage[normalem]{ulem} %used to strikethrough text.

\usepackage{amsfonts}
\usepackage{amssymb}
\usepackage{bbm}
\usepackage{epstopdf}
\usepackage{color}
\numberwithin{equation}{section}
%\usepackage{tikz}
%\usepackage{pgfplots}
%\usepackage{subfig}
%\usepackage[marginparwidth=3cm, marginparsep=.5cm]{geometry}

%\newcommand\f[1]{f^{\scriptscriptstyle [#1]}}
%\newcommand\of[1]{\overline{f}{}^{\scriptscriptstyle [#1]}}
%\newcommand\h[1]{h^{\scriptscriptstyle [#1]}}
%\newcommand\cTT[2]{\cT^{\scriptscriptstyle [#1]}_{\scriptscriptstyle [#2]}}

%%%%%%%%%%%%%%%%%%%%%%%%%%%%%%%%%%%%%%%%%%%%%

%\renewcommand\pa[2]{\partial_{\vphantom{\bar{\zeta}}#2}{#1}}

%%%%%%%%%%%%%%%%%%%%%%%%%%%%%%%%%%%%%%%%%%%%%

\usepackage{amsthm}
\theoremstyle{plain}
\newtheorem{theorem}{Theorem}[section]

\newtheorem{corollary}[theorem]{Corollary}

\newtheorem*{assumption*}{Assumption}

\newtheorem{lemma}[theorem]{Lemma}
\newtheorem{definition}[theorem]{Definition}
\newtheorem{proposition}[theorem]{Proposition}

\theoremstyle{remark}
\newtheorem{remark}[theorem]{Remark}

\newcommand{\old}[1]{}

\newcommand{\G}{\mathcal{G}}
\newcommand{\T}{\mathcal{T}}
\newcommand{\Or}{\mathcal{O}}

\newcommand{\ZZ}{\mathbb{Z}}

\newcommand{\CC}{\mathbb{C}}

\newcommand{\EE}{\mathbb{E}}

\renewcommand{\i}{\mathrm{i}}

\DeclareMathOperator{\Log}{Log}

\usepackage{bm}
\usepackage{color}

\title[Aztec diamond and t-embeddings]{Perfect t-embeddings and the octahedron equation of the two-periodic Aztec diamond}

\author[Tomas Berggren]{Tomas Berggren$^\mathrm{a}$}

\author[Marianna Russkikh]{Marianna Russkikh$^\mathrm{b}$}

%\address{Massachusetts Institute of Technology, Department of Mathematics, 77 Massachusetts Avenue, Cambridge, Massachusetts, 02139–4307}

%\email{russkikh at mit.edu}

\thanks{\textsc{${}^\mathrm{A}$ Royal Institute of Technology, Department of Mathematics, 100 44 Stockholm, Sweden}}
\thanks{\textsc{${}^\mathrm{B}$  University of Notre Dame, Department of Mathematics, 255 Hurley Bldg,
Notre Dame, IN 46556, United States of America}}

\thanks{\texttt{tobergg@kth.se}, \texttt{mrusskik@nd.edu}}

\begin{document}

\begin{abstract} This paper explores the connection between perfect t-embeddings and the octahedron equation in the setting of the two-periodic Aztec diamond. In particular, we show that the positions of both the t-embedding and the corresponding origami map can be expressed as sums of density functions arising from solutions to the octahedron equation with appropriate flat initial conditions.
\end{abstract}

\keywords{dimer model, aztec diamond, perfect t-embeddings, octahedron equation}

%\subjclass[]{}

\maketitle

%\tableofcontents

\section{Introduction}
Our interest in studying \emph{perfect t-embeddings} and their associated \emph{origami maps} stems from their potential to establish the scaling limit of height fluctuations in dimer models~\cite{CLR2}. Constructing these embeddings and analyzing their behavior is therefore a compelling and important problem -- one that has been addressed in an increasing number of settings; see~\cite{Ch-R, BNR23, BNR24, BNR25, KV25}.

The \emph{octahedron equation} is a~$2+1$-dimensional discrete integrable system governing the evolution of a function~$T_{j,k,n}$ on the integer lattice. Originating in the study of quantum integrable systems~\cite{KNS94a, KNS94b}, it has since emerged in a variety of combinatorial settings, see e.g.,~\cite{DF13}, and has been linked to the theory of cluster algebras~\cite{DFK09}. Notably, its solutions can be represented as partition functions of dimer models on Aztec diamonds~\cite{Sp, DF}.

It was shown in~\cite{Ch-R} that the perfect t-embeddings of the uniformly weighted Aztec diamond satisfy the \emph{discrete wave equation}. A similar result holds for the \emph{density functions} associated with solutions of the octahedron equation with flat initial conditions~\cite{DF-SG}. %\textcolor{red}{on the Aztec diamond}~\cite{DF-SG}. 
Both results can be viewed as consequences of the \emph{shuffling algorithm} applied to the Aztec diamond.
This highlights a close structural resemblance between the two constructions. The goal of this paper is to investigate the connection between perfect t-embeddings of the \emph{two-periodic Aztec diamond} and the solutions of the octahedron equation with appropriate initial conditions. %\textcolor{red}{defined on the same graph}. 
We show that both satisfy the same discrete wave equation, albeit with slightly different boundary conditions, and that one can be expressed in terms of the other.

More precisely, we demonstrate that, in the setting of the two-periodic Aztec diamond, the perfect t-embedding admits a natural probabilistic interpretation: it can be written as a sum of density functions associated with solutions of the octahedron equation. This result parallels the expression obtained in~\cite{BNR23}, where perfect t-embeddings of the uniformly weighted Aztec diamond were represented in terms of \emph{edge probabilities}.

Our main theorem relates perfect t-embeddings and their origami map of the two-periodic Aztec diamond to the solution of the octahedron equation. We recall here the definition of the two-periodic Aztec diamond and the octahedron equation, and refer the reader to Section~\ref{sec:t-emb} for a definition of perfect t-embeddings and their origami map.
%Our main theorem concerns perfect t-embeddings and their origami maps of the two-periofic Aztec diamond, we refer the reader to Section~\ref{sec:t-emb} for a definition. To formulate the main theorem, we also need to introduce {\mnote the two-periodic Aztec diamond and the octahedron equation}. 
The two-periodic Aztec diamond is the Aztec diamond with a $1$-parameter family of $(2\times 2)$-periodic edge weights and was introduced and studied in \cite{CY14, CJ16}.
Let us define the Aztec diamond~$A_{n}$  of size~$n$ to be a subset of faces~$(j,k)$ of the square grid~$(\mathbb{Z}+\frac12)^2$ such that $|j|+|k| \leq n-1$. Let $e$ be an edge of the Aztec diamond of size~$n$ adjacent to a face $(j,k)$ with $j+k$ odd, we define a $(2\times 2)$-periodic weight function $\nu$ on edges as follows:
\begin{equation*}
\nu_e=
\begin{cases}
a, &n=1,2 \mod 4 \text{ and } j \text{ even} \quad \text{or} \quad n=3,0 \mod 4 \text{ and } j \text{ odd}, \\
1, & \text{ otherwise},
\end{cases}
\end{equation*}
where $a>0.$

For $j, k, n \in\mathbb{Z}$, $n\geq 1$, $j+k+n$ odd, let $T^{\operatorname{oct}}_{j,k,n}\in\mathbb{R}$ be the solution of the \emph{octahedron equation} 
\begin{equation*}
T^{\operatorname{oct}}_{j,k,n+1}T^{\operatorname{oct}}_{j,k,n-1}=
T^{\operatorname{oct}}_{j+1,k,n}T^{\operatorname{oct}}_{j-1,k,n}+
T^{\operatorname{oct}}_{j,k+1,n}T^{\operatorname{oct}}_{j,k-1,n},
\end{equation*}
with  the initial condition
 \begin{equation*}
T^{\operatorname{oct}}_{j, k, n_{jk}} = t_{j,k},
\end{equation*}
where~$t_{j,k}\in\mathbb{R}$ and~$n_{jk}=j+k+1 \mod 2$. For~$(\epsilon, \eta)\in \mathbb{Z}^2, n\geq 0$ let us define the density function~$\rho^{(\epsilon, \eta)}_{j,k,n}$ by
\begin{equation*}
\rho^{(\epsilon, \eta)}_{j,k,n}=t_{\epsilon, \eta}
\partial_{t_{\epsilon, \eta}}
\log{T^{\operatorname{oct}}_{j,k,n}}.
\end{equation*}
In the definition of the density function, we view the initial conditions $t_{\epsilon, \eta}$ as variables even though we will later specify their values. We are now ready to state our main result.

\begin{figure}
 \begin{center}
\includegraphics[width=0.45\textwidth]{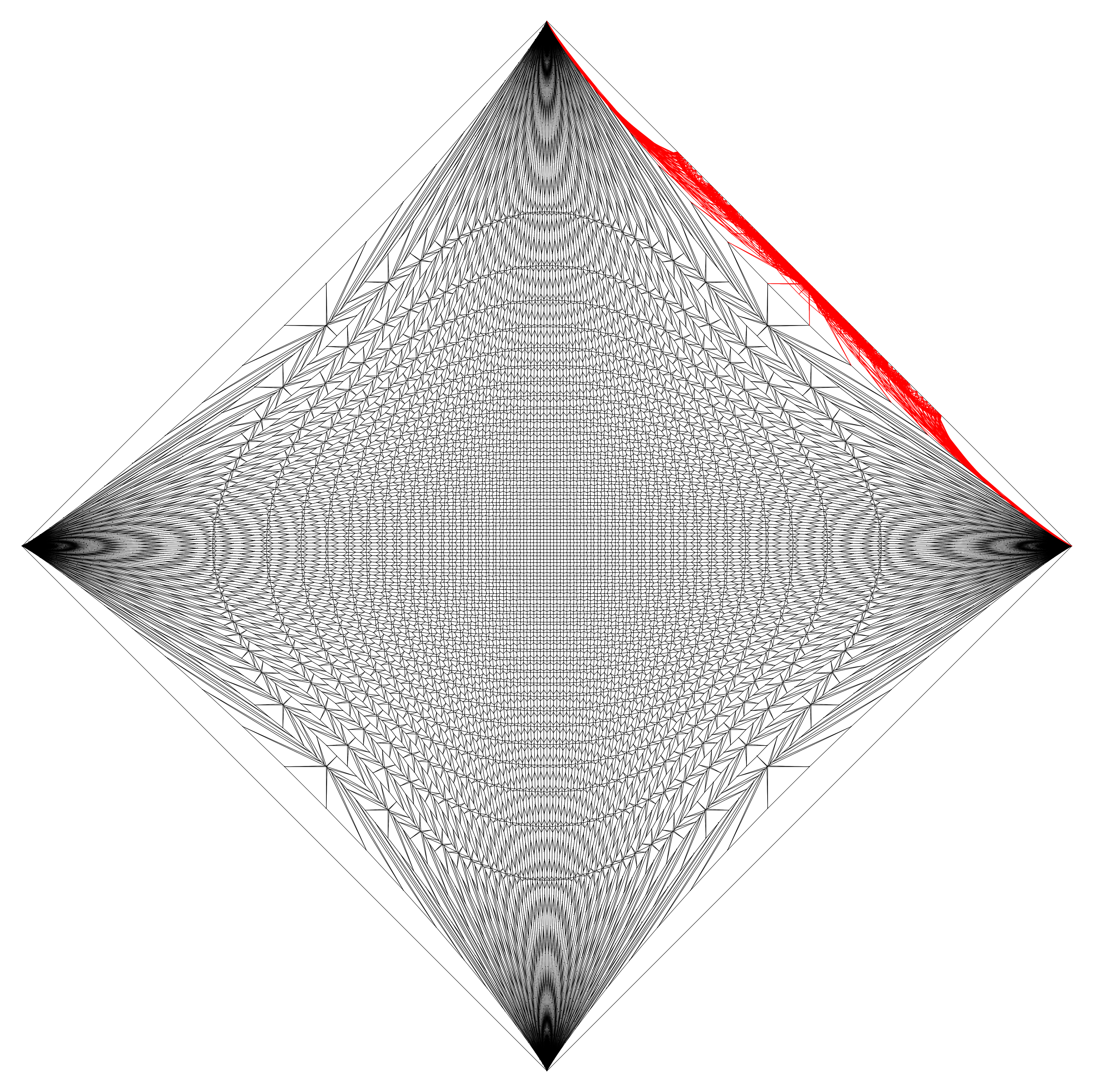}
\includegraphics[width=0.45\textwidth]{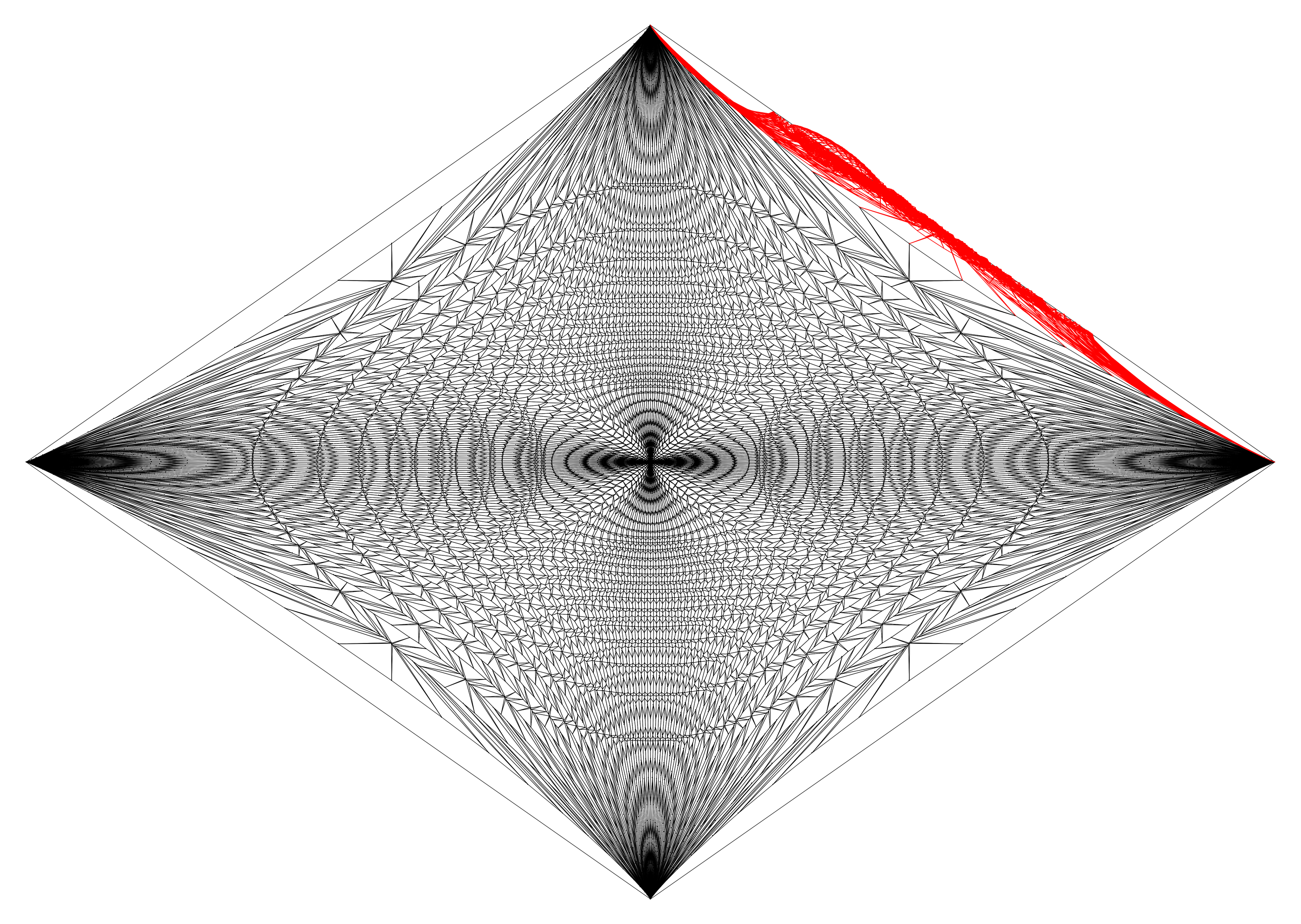}
 % \caption{Perfect t-embedding (black) and origami map (red), with $a=1$. Left:~$n=30$. Right: $n=100$.}
 \caption{Perfect t-embedding (black) and origami map (red) of the reduced Aztec diamond of size~$100$, constructed via reccurence relations derived from the shuffling algorithm. Left:~$a=1$. Right: $a=0.7$.}\label{fig:t_o}
 \end{center}
\end{figure}

%For a definition of perfect t-embeddings and their origami map, see Section~\ref{sec:t-emb}.
\begin{theorem}\label{thm} 
Let~$A_{n+1}$, for~$n\geq 1$, be a sequences of weighted Aztec diamonds described above with weights determined by the parameter~$a>0$, and let~$\rho^{(\epsilon, \eta)}_{j,k,n}$ be the density function associated with the octahedron equation with initial conditions $t_{j,k}=a^{-1}$ if $j=1, \, k=0 \mod 2$, and $t_{j,k}=1$ otherwise.
Then the functions~$\T_n$ and~$\Or_n$ defined on the faces of~$A_{n+1}$ and given by 
\begin{equation}
\T_n(j,k)=f_E(j,k,n)+\i{ a} f_N(j,k,n)-f_W(j,k,n)-\i { a} f_S(j,k,n),
\end{equation}
\begin{equation}
\Or_n(j,k)=f_E(j,k,n)+\i{ a} f_N(j,k,n)+f_W(j,k,n)+\i { a} f_S(j,k,n),
\end{equation}
define a perfect t-embedding and its origami map of the two-periodic Aztec diamond~$A_{n+1}$. 
Here, the functions~$f_E, f_N, f_W$ and~$f_S$ are given by
\begin{equation*}
\begin{split}
f_E(j,k,n)=
&\tfrac{1}{1+a^{-2}}\sum_{s=0}^{m+1}   \rho^{(0,0)}_{j-4s, \,k,\, n-4s } 
+\tfrac{1}{1+a^{2}}\sum_{s=0}^{m+1} \rho^{(1,1)}_{j-4s-1,\, k+1, \, n-4s-2 } \\
&\quad\quad-\tfrac{1}{2}\sum_{s=0}^{m+1} \rho^{(0,1)}_{j-4s-3, \, k+1,\, n-4s-2 } 
-\tfrac{1}{2}\sum_{s=0}^{m+1}\rho^{(1,0)}_{j-4s, \, k, \, n-4s };
\end{split}
\end{equation*}

\begin{equation*}
\begin{split}
f_W(j,k,n)=
&\tfrac{1}{1+a^{-2}}\sum_{s=0}^{m+1} \rho^{(0,0)}_{j+4s, \, k, \, n-4s}
+\tfrac{1}{1+a^{2}}\sum_{s=0}^{m+1} \rho^{(1,1)}_{j+4s+3, \, k+1, \, n-4s-2}\\
&\quad\quad-\tfrac{1}{2}\sum_{s=0}^{m+1} \rho^{(0,1)}_{j+4s+3, \, k+1, \, n-4s-2}
-\tfrac{1}{2}\sum_{s=0}^{m+1} \rho^{(1,0)}_{j+4s, \, k, \, n-4s};
\end{split}
\end{equation*}

\begin{equation*}
\begin{split}
f_S(j,k,n)=
&\tfrac{1}{1+a^{2}}\sum_{s=0}^{m+1} \rho^{(0,0)}_{ j, \, k+4s, \, n-4s}
+\tfrac{1}{1+a^{-2}}\sum_{s=0}^{m+1} \rho^{(1,1)}_{j+1, \, k+4s+3, \, n-4s-2}\\
&\quad\quad-\tfrac{1}{2}\sum_{s=0}^{m+1} \rho^{(0,1)}_{j+1, \, k+4s+3, \, n-4s-2}
-\tfrac{1}{2}\sum_{s=0}^{m+1} \rho^{(1,0)}_{j-4s, \, k, \, n-4s};
\end{split}
\end{equation*}

\begin{equation*}
\begin{split}
f_N(j,k,n)=
&\tfrac{1}{1+a^{2}}\sum_{s=0}^{m+1} \rho^{(0,0)}_{ j, \, k-4s, \, n-4s}
+\tfrac{1}{1+a^{-2}}\sum_{s=0}^{m+1} \rho^{(1,1)}_{ j+1, \, k-4s-1, \, n-4s-2}\\
&\quad\quad-\tfrac{1}{2}\sum_{s=0}^{m+1} \rho^{(0,1)}_{ j+1, \, k-4s-3, \, n-4s-2}
-\tfrac{1}{2}\sum_{s=0}^{m+1} \rho^{(1,0)}_{ j, \, k-4s, \, n-4s},
\end{split}
\end{equation*}
where~$m=\lfloor\frac{n}{4}\rfloor$.
\end{theorem}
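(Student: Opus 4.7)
The plan is to prove the theorem by induction on $n$, leveraging the fact that both perfect t-embeddings of the two-periodic Aztec diamond and density functions of the octahedron equation satisfy discrete wave equations originating from the shuffling algorithm, as indicated in the introduction via the precedents \cite{Ch-R, DF-SG, BNR23}.

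First, I would derive an explicit recurrence for the density functions $\rho^{(\epsilon,\eta)}_{j,k,n}$. Differentiating $\log T^{\operatorname{oct}}_{j,k,n+1}+\log T^{\operatorname{oct}}_{j,k,n-1}=\log\bigl(T^{\operatorname{oct}}_{j+1,k,n}T^{\operatorname{oct}}_{j-1,k,n}+T^{\operatorname{oct}}_{j,k+1,n}T^{\operatorname{oct}}_{j,k-1,n}\bigr)$ with respect to $t_{\epsilon,\eta}$ (and multiplying by $t_{\epsilon,\eta}$) gives
\begin{equation*}
\rho^{(\epsilon,\eta)}_{j,k,n+1}+\rho^{(\epsilon,\eta)}_{j,k,n-1}=\alpha_{j,k,n}\bigl(\rho^{(\epsilon,\eta)}_{j+1,k,n}+\rho^{(\epsilon,\eta)}_{j-1,k,n}\bigr)+(1-\alpha_{j,k,n})\bigl(\rho^{(\epsilon,\eta)}_{j,k+1,n}+\rho^{(\epsilon,\eta)}_{j,k-1,n}\bigr),
\end{equation*}
where $\alpha_{j,k,n}$ is the explicit ratio of $T^{\operatorname{oct}}$-monomials. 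Using the two-periodic initial condition $t_{j,k}=a^{-1}$ on the sublattice $\{j=1,k=0\bmod 2\}$ and $t_{j,k}=1$ elsewhere, these coefficients become position-dependent functions of $a$ that take only a few values according to the parity class of $(j,k,n)$. This is precisely the discrete wave equation shown in \cite{DF-SG}, adapted to the two-periodic setting.

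Next, I would derive the corresponding recurrence for the perfect t-embedding and origami map of the two-periodic Aztec diamond. Following the strategy of \cite{Ch-R}, a single step of the shuffling algorithm on $A_{n+1}$ relates $\T_{n+1}$ and $\Or_{n+1}$ to $\T_n,\T_{n-1},\Or_n,\Or_{n-1}$ via averages weighted by the same $a$-dependent coefficients $\alpha_{j,k,n}$, yielding the same discrete wave equation with boundary data set by the initial hexagonal (or square) polygon prescribed by the definition of the perfect t-embedding. Once both sides satisfy the same linear recurrence, the theorem reduces to checking that the specific linear combinations in the definitions of $\T_n$ and $\Or_n$ intertwine correctly. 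Concretely, I would verify that each of $f_E,f_N,f_W,f_S$ satisfies the recurrence from Step 1 modulo boundary terms, by telescoping the sums over $s$ and applying the density-function recurrence term by term; the four coefficients $\tfrac{1}{1+a^{\pm 2}}$ and $\tfrac{1}{2}$ are then forced by matching the parity-dependent values of $\alpha_{j,k,n}$ across the four sublattices indexed by $(\epsilon,\eta)\in\{0,1\}^2$.

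The main obstacle will be the boundary-condition bookkeeping: the sums $\sum_{s=0}^{m+1}$ truncate exactly where the t-embedding meets the boundary polygon, and one must verify that the boundary contributions left over by the telescoping assemble into the correct vertices of the outer polygon of the perfect t-embedding. This is where the phrase \emph{slightly different boundary conditions} from the introduction becomes quantitative: the density functions themselves vanish on certain initial time-slices, while the t-embedding requires a prescribed polygon, and the asymmetric shifts in the indices (e.g.\ $j-4s$ versus $j+4s+3$, and $k+4s+3$ versus $k-4s-3$) must be chosen precisely so that these two pieces of boundary data agree after summation. A secondary, more routine obstacle is the base case: one must check directly, for small $n$, that $\T_n(j,k)$ and $\Or_n(j,k)$ agree with the perfect t-embedding and origami map of $A_{n+1}$ constructed by hand (or via the shuffling recurrence initialized at $A_2$), which fixes all constants of integration in the inductive argument.
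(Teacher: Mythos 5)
Your proposal follows essentially the same route as the paper: derive the parity-dependent discrete wave equation for the density functions from the octahedron recurrence (the paper's $L_{j,k,n}=\tfrac{1}{1+\tilde\gamma_{j,k,n}}$ computation via Lemma 3.1 of \cite{DF-SG}), derive the matching recurrence for $\T_n$ and $\Or_n$ from the shuffling algorithm, decompose into $f_E,f_N,f_W,f_S$, and identify the telescoped sums of shifted fundamental solutions with the density functions by checking the source terms and the vanishing initial slices. The coefficients $\tfrac{1}{1+a^{\pm2}}$ and $\tfrac12$ indeed arise exactly as you predict, from the values of the wave-equation coefficient along the light-cone boundary, so the plan is sound and matches the paper's argument.
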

%\begin{remark} 
%{\mnote[It is well known that the solution~$T^{\operatorname{oct}}_{j,k,n}$ can be expressed in terms of the partition function of the dimer model on an Aztec diamond~$A_{j,k,n}$ of size~$n-1$ centered at~$(j,k)$. %, obtained from~$A_{n-1}$ by translation by $(j,k)$.
%Consequently, the density function~$\rho^{(\epsilon, \eta)}_{j,k,n}$ can be written as the expected number of dimers adjacent to the face~$(\epsilon, \eta)$ in~$A_{j,k,n}$; see Remarks~\ref{rem:density_function} and~\ref{rem:weights} for details. This probabilistic interpretation of the density function casts the right-hand side of the previous theorem in a more probabilistic light, in parallel with the results of~\cite{BNR23}.]} 
%\end{remark}
\begin{remark} 
Let~$A_{j,k,n}$ denote the Aztec diamond of size~$n-1$ centered at~$(j,k)$ with edge weight function~$\hat \nu$. For an edge~$e$ adjacent to a face~$(\epsilon, \eta)$ with~$\epsilon+\eta$ odd, set~$\hat \nu_e=a$ if~$\epsilon$ is odd, and~$\hat \nu_e=1$ if~$\epsilon$ is even. It is a remarkable fact that the solution~$T^{\operatorname{oct}}_{j,k,n}$ of the octahedron equation with initial condition given in the previous theorem can be expressed in terms of the partition function of the dimer model on~$A_{j,k,n}$. Consequently, 
%$\rho^{(\epsilon, \eta)}_{j,k,n}=\EE_{j,k,n}\left[1-\mathcal D_{\epsilon, \eta}\right],$
\begin{equation*}
\rho^{(\epsilon, \eta)}_{j,k,n}=\EE_{j,k,n}\left[1-\mathcal D_{\epsilon, \eta}\right],
\end{equation*}
where $\mathcal D_{\epsilon, \eta}$ denotes the number of dimers adjacent to the face $(\epsilon, \eta)$ in a random dimer configuration of~$A_{j,k,n}$; see Section~\ref{sec:t-emb_octa}, in particular Remarks~\ref{rem:density_function} and~\ref{rem:weights}, for details. This probabilistic interpretation of the density function casts the right-hand side of the expressions in the previous theorem in a more probabilistic light, in parallel with the results of~\cite{BNR23}.
\end{remark}

%\begin{remark}
The expressions from Theorem~\ref{thm} can be used to study the limit of the perfect t-embeddings and their corresponding origami maps for the two-periodic Aztec diamond as~${n\to\infty}$. We do not provide the technical details of this convergence since,
%{\mnote Using the expressions from Theorem~\ref{thm}, one can show the convergence of perfect t-embeddings and their corresponding origami maps for the two-periodic Aztec diamond as~${n\to\infty}$. We do not provide the technical details of this convergence since,} 
recently, the method introduced in~\cite{BNR24} was extended to a broad class of doubly periodic Aztec diamonds in~\cite{BNR25}, where the convergence of t-surfaces to maximal surfaces with cusps was established. The formation of the cusp can already be anticipated in Figure~\ref{fig:t_o}. This alternative approach is based on expressing both the perfect t-embedding and the origami map in terms of the inverse Kasteleyn matrix. We expect that method to be more robust than the one presented here, as it relies solely on the inverse Kasteleyn matrix, rather than on the shuffling algorithm. However, there exist models for which explicit formulas for the inverse Kasteleyn matrix are not known, while a connection to the octahedron equation has been established; see~\cite{DF-V}. We believe that the connection between perfect t-embeddings and the octahedron equation demonstrated here for the two-periodic Aztec diamond can be extended to such models, and used to study the scaling limits of their associated t-surfaces.
%\end{remark}
%\begin{remark}
%Using the expressions from Theorem~\ref{thm}, one can show the convergence of perfect t-embeddings and their corresponding origami maps for the two-periodic Aztec diamond as~${n\to\infty}$. However, extending this approach to more general settings is extremely challenging. 
%Instead, it was shown in~\cite{BNR25} that the method introduced in~\cite{BNR24} can be adapted to a broader class of $(2\times\ell)$-periodic weights to establish the convergence of t-surfaces to a maximal surface with cusps. This alternative approach relies on expressing both the perfect t-embedding and the origami map in terms of the inverse Kasteleyn matrix, without relying on the shuffling algorithm.  
%\end{remark}

%\begin{figure}
 %\begin{center}
%\includegraphics[width=0.45\textwidth]{30_07.png}
%\includegraphics[width=0.45\textwidth]{100_07.png}
 % \caption{Perfect t-embedding (black) and origami map (red), with $a=0.7$. Left: $n=30$. Right: $n=100$.}
 %\end{center}
%\end{figure}

\addtocontents{toc}{\protect\setcounter{tocdepth}{1}}
\subsection*{Acknowledgements}
We are grateful to Terrence George and Matthew Nicoletti for many stimulating and insightful discussions. 
We thank Leonid Petrov for his help with the simulations.
We would also like to thank Alexei Borodin and Tom Hutchcroft for their support and interest.
TB was supported by the Knut and Alice Wallenberg Foundation grant KAW 2019.0523. 
MR was partially supported by a Simons Foundation Travel Support for Mathematicians MPS-TSM-00007877. 
\addtocontents{toc}{\protect\setcounter{tocdepth}{2}}

\section{Perfect t-embeddings of Aztec diamonds}

\subsection{Perfect t-embeddings and origami maps}\label{sec:t-emb} In this section, we review the basic definitions and properties of \emph{t-embeddings}, also known as \emph{Coulomb gauges}. We refer an interested reader to \cite{KLRR, CLR1, CLR2} for more details.

 Let $(\G, \nu)$ be a weighted planar bipartite graph. 
 We define a probability measure~$\mathbb{P}$ on the set~$\mathcal{M}$ of dimer configurations by
 \[\mathbb{P}[m]=\frac{\prod_{e\in m} \nu(e)} { \sum_{m\in\mathcal{M}} \prod_{e\in m} \nu(e)} .\]
 Recall that two weight functions $\nu, \nu': E(\G)\to \mathbb{R}_{+}$ are \emph{gauge equivalent} if there exists a pair of gauge functions 
$(F^\bullet_{\operatorname{gauge}}, F^\circ_{\operatorname{gauge}})$ 
such that $\nu'(wb)=F^\circ_{\operatorname{gauge}}(w)\nu(wb)F^\bullet_{\operatorname{gauge}}(b)$. 
Given edge weights~$\nu$ on a bipartite graph, one can associate a face weight $X_{v^*}$ to each face of $\G$ by
\[X_{v^*}:=\prod_{s=1}^d\frac{\nu(w_sb_s)}{\nu(w_{s+1}b_s)},\]
%\[X_{v^*}:=\prod_{s=1}^d\frac{\nu_{b_sw_s}}{\nu_{b_sw_{s+1}}},\]
where the face $v^*$ has degree $2d$ with vertices denoted by $w_1, b_1, \ldots , w_d, b_d$ in clockwise order. Note that two weight functions are gauge equivalent if and only if they correspond to the same face weights.

%\begin{figure}
 %\begin{center}
 %\includegraphics[width=0.9\textwidth]{Finite_T_emb.pdf}
 %\end{center}
%\caption{Perfect t-embedding and associated bipartite graph with its augmented dual.}\label{fig:temb_def}
%\end{figure}

\begin{definition}\label{def:temp} Given a weighted planar bipartite graph~$(\G, \nu)$, a t-embedding $\T(\G^*)$ is a proper embedding of an augmented dual graph, where the outer face of $\T(\G^*)$ corresponds to the cycle replacing $f_{\operatorname{out}}$ in the augmented dual $\G^*$ 
such that the following conditions are satisfied:
\begin{itemize}
\item[$\bullet$] the sum of the angles at each inner vertex of $\T(\G^*)$ at the corners corresponding to black faces is equal to $\pi$ (and similarly for white faces),
\item[$\bullet$] the geometric weights (dual edge lengths) $|\T(v^*_1)-\T(v^*_2)|$ are gauge equivalent to $\nu_e$, where $v^*_1$ and $v^*_2$ are vertices of the dual edge $e^*$.
\end{itemize}
\end{definition}

The last condition of the above definition can be equivalently formulated in terms of face weights.

\begin{remark}
In the setup of Definition~\ref{def:temp},
 let $v^*$ be a face of
degree~$2d$ with vertices (in clockwise order)~$w_1, b_1, \ldots , w_d, b_d$. Let~$X_{v^*}$ be the face weight associated with the face~$v^*$ and weight function~$\nu$.
Then the second condition of Definition~\ref{def:temp} is equivalent to 
\[X_{v^*}=\prod_{s=1}^d\frac{|\T(v^*)-\T(v^*_{2s-1})|}{|\T(v^*_{2s})-\T(v^*)|} \quad \text{ for all inner } v^*\in\G^*,\]
where we by~$v^*_1, v^*_2, \ldots, v^*_{2d}$ denote the dual vertices adjacent to~$v^*$, such that~$v^*v^*_{2s-1}=(w_s b_s)^*$
and~$v^*v^*_{2s}=(b_sw_{s+1})^*$.   

Moreover, the angle condition %Definition~\ref{def:temp}
 implies that 
\[X_{v^*}=(-1)^{d+1}\prod_{s=1}^d\frac{\T(v^*)-\T(v^*_{2s-1})}{\T(v^*_{2s})-\T(v^*)}.\]
\end{remark}

The notion of \emph{perfect t-embedding} was introduced in~\cite{CLR2}.

\begin{definition}
A t-embedding is perfect if the outer face of $\T(\G^*)$ is a tangential polygon
to a circle and all the non-boundary edges emanating from boundary vertices are bisectors of the corresponding angles
\end{definition}

To each t-embedding $\T(\G^*)$ one can associate the so-called origami map $\Or:\G^*\to\mathbb{C}$.

\begin{definition} To get an origami map $\Or(\G^*)$ from $\T(\G^*)$ one can choose a white root
face $w_0$, set $\Or(v^*)=\T(v^*)$ for all vertices $v^*$ adjacent to the root face, and fold the plane along every edge of the t-embedding.
\end{definition}

\begin{figure}
 \begin{center}
\includegraphics[width=0.17\textwidth]{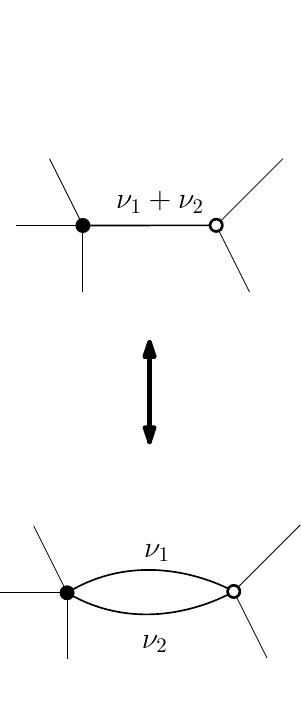}
$\quad\quad\quad\quad$
\includegraphics[width=0.17\textwidth]{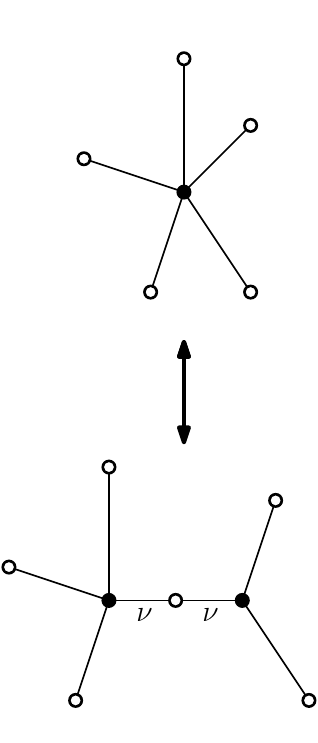}
$\quad\quad\quad\quad\quad$
\includegraphics[width=0.17\textwidth]{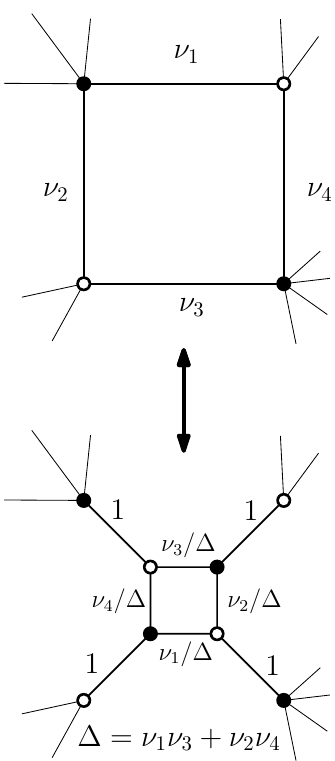}
 \end{center}
\caption{Elementary transformations of weighted bipartite graphs:
(1) parallel edges with weights~$\nu_1, \nu_2$ can be replaced by a single edge with weight~$\nu_1 + \nu_2$; 
(2) contracting a degree~$2$ vertex whose edges have equal weights; 
(3) the spider move, with weight transformation as shown.
}\label{fig:elem}
\end{figure}

%\begin{definition} Elementary transformations of a dimer graph
%\begin{itemize}
%\item[$\bullet$] merging/doubling of parallel edges
%\item[$\bullet$] contraction/adding of a vertex of degree $2$
%\item[$\bullet$] spider move
%\end{itemize}
%\end{definition}

There are~$3$ types of elementary transformations of the planar bipartite graph that do not change partition and correlation functions of the dimer model, see Figure~\ref{fig:elem}.
One of the crucial (on the discrete level) properties of t-embeddings is related to elementary transformations of bipartite graphs. 
\begin{proposition}[\cite{KLRR}]\label{prop:elem}
T-embeddings of~$\G^*$ together with their origami maps are preserved under elementary transformations of~$\G$.
\end{proposition}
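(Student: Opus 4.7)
The plan is to verify the proposition locally for each of the three elementary transformations depicted in Figure~\ref{fig:elem}. Since each move modifies $\G$ only inside a bounded region, the induced modification of $\G^*$ is also local, so the angle condition and the gauge equivalence of geometric weights in Definition~\ref{def:temp} continue to hold automatically outside that region. The task therefore reduces to describing a consistent local modification of $\T(\G^*)$, checking that the angle sums at each interior vertex (over black corners, and separately over white corners) still equal $\pi$, and checking that the new dual edge lengths are gauge equivalent to the transformed weights. Moreover, because the origami map is obtained from $\T$ by successive reflections along the edges of the t-embedding, it will be enough to verify at the end that the composition of reflections across the edges modified in each move is the same before and after; this is a consequence of the angle condition at the affected vertices.

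I would start with the parallel-edges move, which is the most transparent case. The bigon face of $\G$ corresponds to a degree-$2$ vertex $v^*$ of $\G^*$ whose two dual edges connect the same pair of vertices. Removing $v^*$ replaces those two dual edges with a single edge; the angle sums at the other endpoints are unaffected because the two corners at $v^*$ are straight, and the length of the new edge equals the sum of the two removed dual edge lengths, which is gauge equivalent to $\nu_1+\nu_2$. The contraction of a degree-$2$ vertex with equal edge weights is essentially dual to this: two adjacent dual faces are merged into one, the sum of angles around the merged face is the sum of the two original sums, and the dual edge lengths on the boundary are unchanged.

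The spider move is the main obstacle. Here the combinatorics of $\G$ genuinely changes inside the affected hexagonal region, and four outer dual vertices remain fixed while two new inner dual vertices appear in new positions. The strategy is to show that, given the outer quadrilateral of $\T$ together with the new face weights, the positions of the two new inner dual vertices are uniquely determined by the angle conditions, and then to verify by direct computation that the resulting dual edge lengths are gauge equivalent to the weights produced by the spider transformation formula. I would carry out this verification by writing angles as arguments of differences $\T(v^*)-\T(v_s^*)$ and reducing the claim to a cross-ratio identity among the six points in the plane, using the face-weight reformulation of the second condition of Definition~\ref{def:temp} given in the remark that follows it.

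Once the t-embedding part is established, the origami part follows easily. The folding rule defining $\Or$ depends only on the collection of edges of $\T$, and for each move the edges modified locally generate a group of reflections whose composition, by the angle condition at the shared vertices, acts trivially on the outer dual vertices of the affected region. Consequently $\Or$ restricted to vertices outside the affected region is unchanged, and applying the same folding rule to the new t-embedding reproduces the same origami map. The only nontrivial check is again in the spider move, where one verifies that the composition of reflections across the three old local edges agrees with the composition across the three new ones, which reduces to the same angle identity used to determine the new inner vertices.
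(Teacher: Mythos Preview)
The paper does not actually prove this proposition: it is stated with a citation to \cite{KLRR} and followed only by the sentence ``This result was first proven in~\cite{KLRR}, see also Remark~2.14 and Corollary~2.15 in~\cite{BNR23}\ldots''. There is therefore no proof in the paper to compare against; your proposal is a sketch of what the original argument in \cite{KLRR} looks like, and in broad outline it is correct: one checks the three moves locally, the first two being essentially trivial and the spider move carrying all the content.

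One concrete inaccuracy: in the spider move the dual picture does not produce ``two new inner dual vertices''. The quadrilateral face of $\G$ corresponds to a single degree-$4$ vertex $v^*$ of $\G^*$, and the spider move on the t-embedding is the \emph{central move} of \cite{KLRR} (this is exactly the reference used later in the paper, in the proof of Proposition~\ref{prop:t_rec}): the four neighbours $v_1^*,\dots,v_4^*$ stay fixed and $v^*$ is sent to the unique other point $\tilde v^*$ for which the angle condition around it still holds. The verification is then a single projective identity for five coplanar points, equivalent to the face-weight formula in the remark after Definition~\ref{def:temp}, rather than a determination of two separate points. Your description of the origami-map step (matching compositions of reflections across the old and new local edges) is the right idea and is again a consequence of the same angle identity at $v^*$ and $\tilde v^*$.
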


This result was first proven in~\cite{KLRR}, see also Remark 2.14 and Corollary 2.15 in~\cite{BNR23} for confirmation that the perfectness is also preserved under these transformations.

\subsection{The Aztec diamond and its reduction.}
\begin{figure}
 \begin{center}
 
   \begin{tabular}{c c}
  \begin{minipage}[c]{0.6\textwidth}
\includegraphics[width=1.1\textwidth]{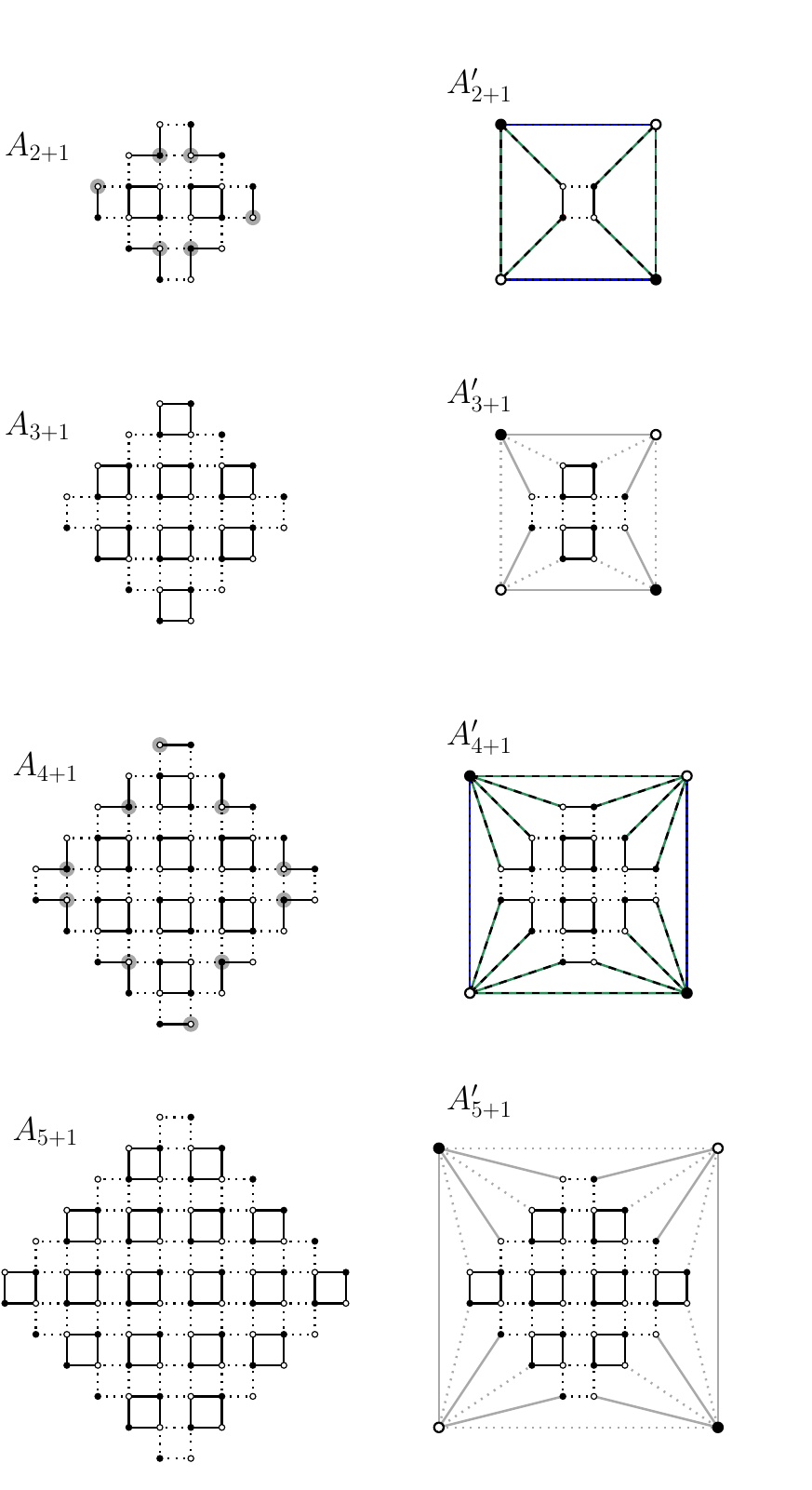}
\end{minipage}
&
\begin{minipage}[c]{0.3\textwidth}

\includegraphics[width=0.12\textwidth]{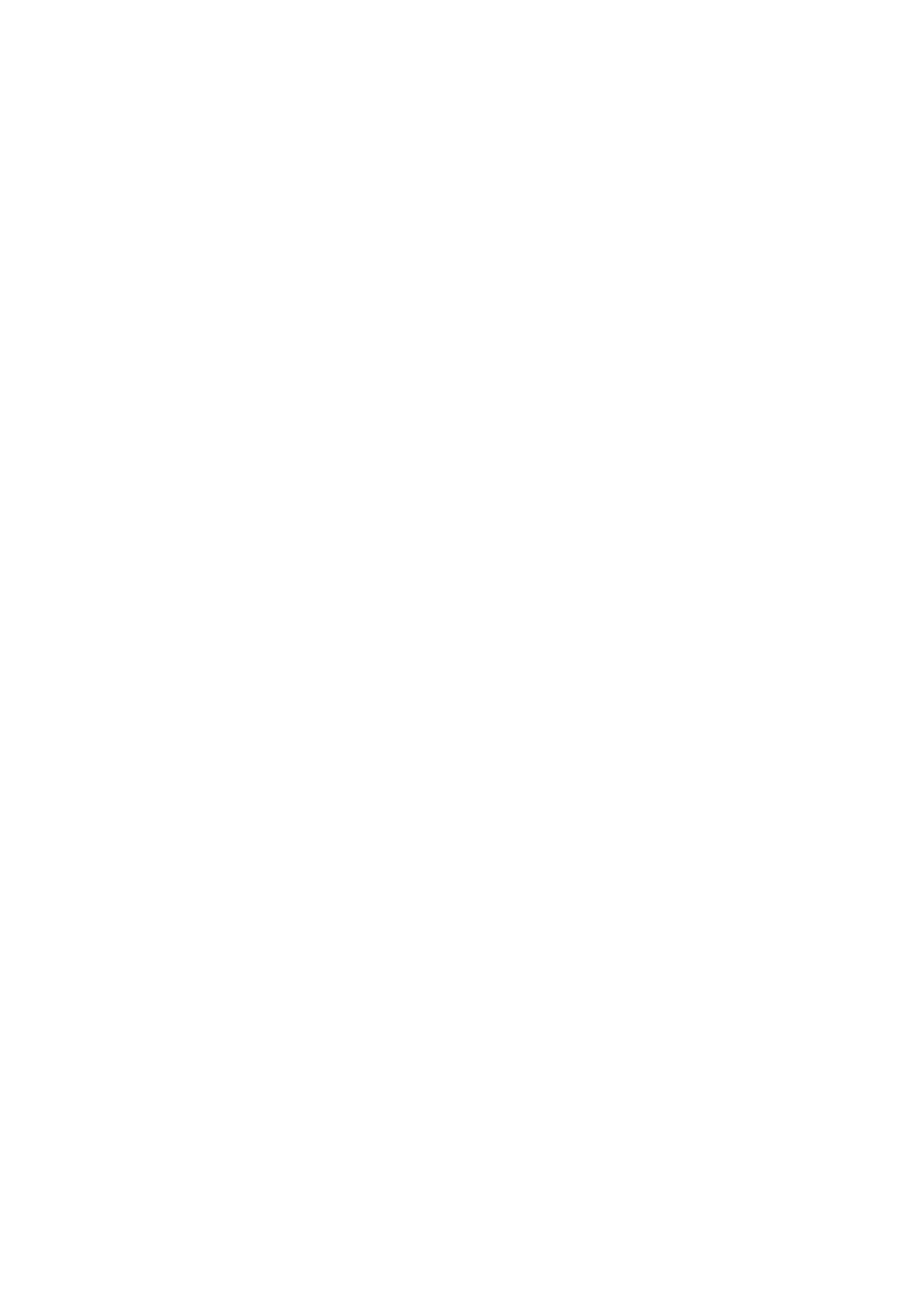} 

\includegraphics[width=1\textwidth]{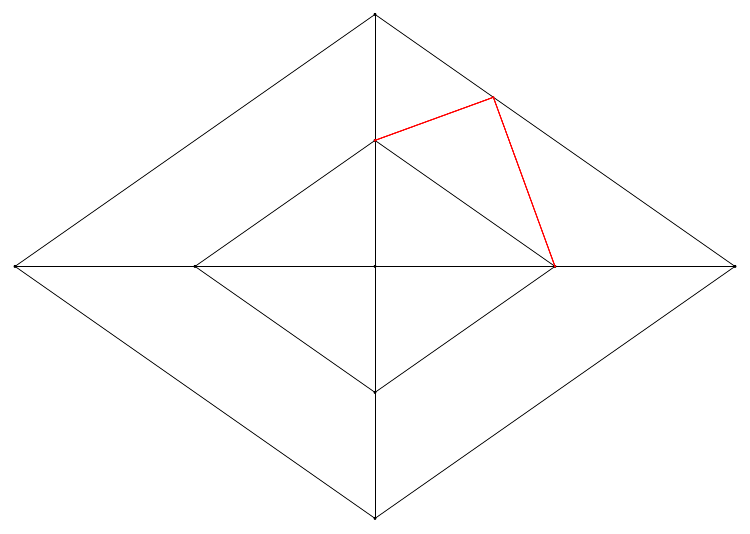} 

\includegraphics[width=0.13\textwidth]{empty.pdf} 

 \includegraphics[width=1\textwidth]{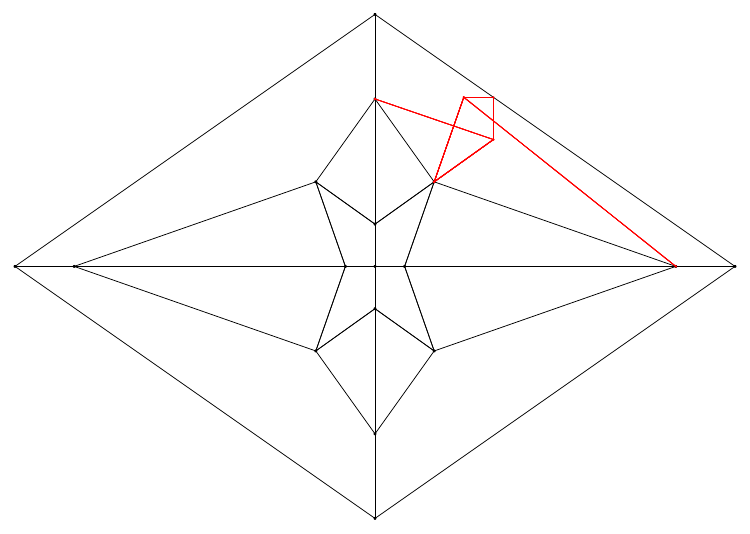} 
 
 \includegraphics[width=0.3\textwidth]{empty.pdf} 

 \includegraphics[width=1\textwidth]{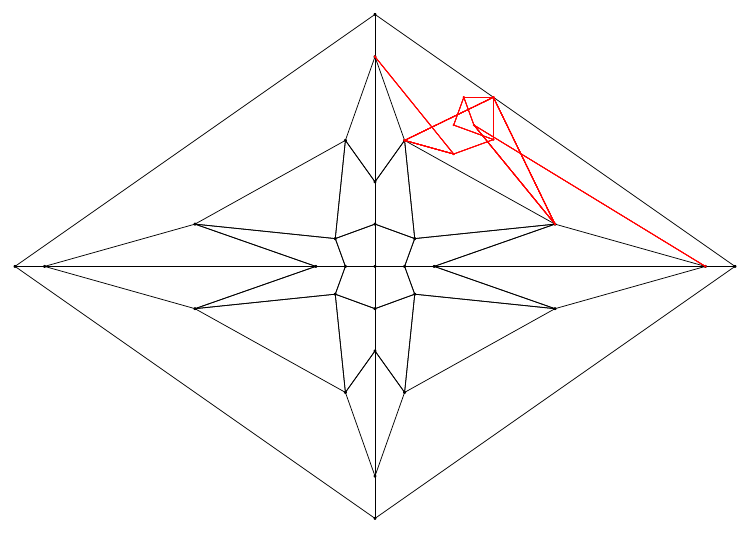}
 
 \includegraphics[width=0.35\textwidth]{empty.pdf} 
 
  \includegraphics[width=1\textwidth]{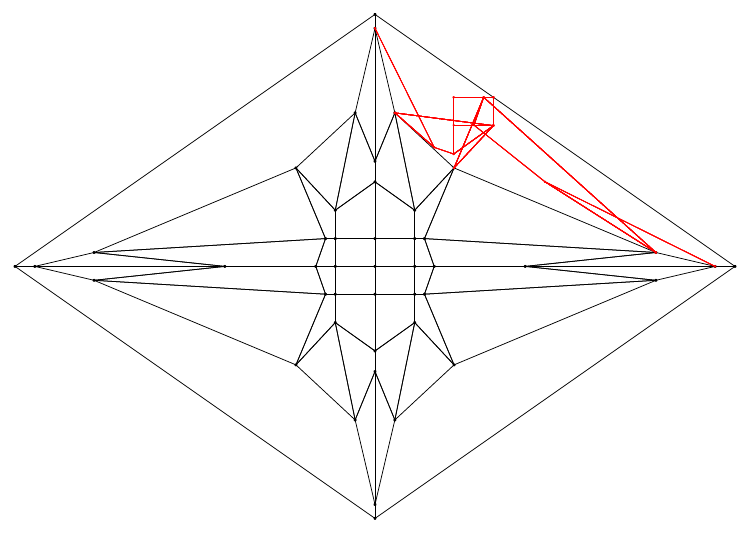} 
  
  \includegraphics[width=0.15\textwidth]{empty.pdf}   
\end{minipage}
\end{tabular}
  \caption{{\bf Left:} Aztec diamonds. 
  The set $V_{\operatorname{gauge}}(A_{n+1})$ is shown in grey.
  For the reduction process multiply edges adjacent to vertices of~$V_{\operatorname{gauge}}(A_{n+1})$ by~$a$. 
  {\bf Middle:} Reduced Aztec diamonds. {\bf Left and middle:} The weights on black dotted edges are~$a$, on black edges are~$1$, on grey dotted edges are~$2a$, on grey edges are~$2$, on green edges are~$a^2$, on blue edges are~$a^3$, on blue edges with black dots are $a^3+a$, and on green and black edges are $a^2+1$. {\bf Right:} perfect t-embedding (black) and its origami map (red) for~$a=0.7$.}\label{fig_period}
 \end{center}
\end{figure}
Consider the square grid~$(\mathbb{Z}+\frac12)^2$. The faces of such a grid can be naturally indexed by pairs~$(j,k)\in\mathbb{Z}^2.$ Let~$n$ be a positive integer. Let us define the \emph{Aztec diamond}~$A_{n+1}$  of size~$n+1$ to be the subset of the faces~$(j,k)$ of the square grid~$(\mathbb{Z}+\frac12)^2$ such that $|j|+|k| \leq n$. 
%We are interested in the Dimer model of the Aztec diamond with $2\times 2$ doubly periodic weights. 
We also define a $2\times 2$-periodic weight function $\nu$ on edges of the Aztec diamond in the following way. Let $e$ be an edge of the Aztec diamond of size $n$ adjacent to a face $(j,k)$ with $j+k$ odd, then
\begin{equation}\label{two_per_weights}
\nu_e=
\begin{cases}
a, &n=1,2 \mod 4 \text{ and } j \text{ even} \quad \text{or} \quad n=3,0 \mod 4 \text{ and } j \text{ odd}, \\
1, & \text{ otherwise},
\end{cases}
\end{equation}
where $a>0$. We also chose a bipartite coloring of the vertices of the Aztec diamond such that all North-Eastern corners are black, see Figure~\ref{fig_period}. We refer to the Aztec diamond with these weights as the \emph{two-periodic Aztec diamond}.

For $n$ even, let $V_{\operatorname{gauge}}$ be the following set of vertices of the Aztec diamond $A_{n+1}$:
\begin{align*}
%\begin{cases}
\begin{cases}
%\text{vertices } \left(j\pm\tfrac12, k\mp\tfrac12\right) \text{ of the Aztec diamond } A_{n+1} \text{ with } j \text{ even and } j+k=\pm n\\
\text{vertices } \left(j\pm\tfrac12, k\mp\tfrac12\right) \text{ with } j \text{ even and } j+k=\pm n\\
%\text{vertices } \left(j\pm\tfrac12, k\pm\tfrac12\right) \text{ of the Aztec diamond } A_{n+1} \text{ with } j \text{ odd and } (\mp j)+(\pm k)=n
\text{vertices } \left(j\pm\tfrac12, k\pm\tfrac12\right) \text{ with } j \text{ odd and } (\mp j)+(\pm k)=n
\end{cases},
& 
\text{ if } n=4m+2\\
%\,\\
\begin{cases}
\text{vertices } \left(j\mp\tfrac12, k\pm\tfrac12\right) \text{ with } j \text{ even and } j+k=\pm n\\
\text{vertices } \left(j\mp\tfrac12, k\mp\tfrac12\right) \text{ with } j \text{ odd and } (\mp j)+(\pm k)=n
\end{cases},
& 
\text{ if } n=4m.
%\end{cases}
\end{align*}

Following~\cite{Ch-R}, let us also define the \emph{reduced Aztec diamond}~$A'_{n+1}$  of size~$n+1$. To obtain the reduced Aztec diamond $A'_{n+1}$ from $A_{n+1}$ one should make the following sequence of moves:
\begin{itemize}
\item[$\bullet$] apply a gauge transform (if needed) to modify (only) weights on edges adjacent to faces~$(j,k)$ with~$|j|+|k|=n$. More precisely, for $n$ is even, multiply by $a$ all weights of edges adjacent to vertices of the set~$V_{\operatorname{gauge}}(A_{n+1})$;
\item[$\bullet$] contract black vertices $(j\pm\frac12,k\pm\frac12)$ of $A_{n+1}$ with $j+k=\pm n$;
%contract white vertices $(i\mp\frac12,j\pm\frac12)$ of $A_{n+1}$ with $i+j=n$ to a new white vertex;\\
%contract white vertices $(i\mp\frac12,j\pm\frac12)$ of $A_{n+1}$ with $i+j=-n$ to a new white vertex;
\item[$\bullet$] contract white vertices $(j\pm\frac12,k\mp\frac12)$ of $A_{n+1}$ with $\begin{cases} j-k=\pm n\\  
\{|j|, |k|\} \neq \{0,n\} \end{cases}$;  
%contract black vertices $(i+\frac12,j+\frac12)$ of $A_{n+1}$ with $\begin{cases} i-j=-n\\ (i,j)\neq(0,n) \end{cases}$ to a new black vertex;\\
%contract black vertices $(i+\frac12,j+\frac12)$ of $A_{n+1}$ with $\begin{cases} i-j=n\\  (i,j)\neq(n,0) \end{cases}$  to a new black vertex;\\
%contract black vertices $(i+\frac12,j+\frac12)$ of $A_{n+1}$ with $\begin{cases} i-j=-n\\ (i,j)\neq(0,n) \end{cases}$ to a new black vertex;
 \item[$\bullet$] merge pairwise all the $4n$ obtained pairs of parallel edges.
\end{itemize}
\begin{remark}
The only difference with reduction process described in~\cite{Ch-R} is the first step: we need this additional step here, since contraction of a vertex of degree~$2$ is possible if and only if the two edge weights are equal to each other.
\end{remark}

Note that the inner vertices of the augmented dual $(A_{n+1})^*$ are in natural correspondence with the inner faces of $A_n$ and can therefore be indexed by $(j,k)\in\mathbb{Z}^2$ with $|j|+|k| < n$. We index the boundary vertex of the augmented dual $(A_{n+1})^*$ adjacent to the dual vertex indexed by~$(n-1,0)$ by~$(n,0)$. Similarly we index the other three boundary vertices of the augmented dual~$(A_{n+1})^*$ by~$(0,n)$, $(-n,0)$ and $(0,-n)$.

%\newpage
\subsection{Recurrence relation for t-embeddings} In this section, we describe the positions of the vertices of a perfect t-embedding of the reduced Aztec diamond obtained from the Aztec diamond with weights~\eqref{two_per_weights} by the reduction procedure described in the previous section.  
%Aztec diamond in the setup of $2\times2$-periodic weights.
%The recurrence relation of the $t$-embedding is given as follows. 

The two-periodic Aztec diamond of size $n+1$ 
%with $2\times 2$-periodic weights 
can be obtained from the Aztec diamond of size $n$ using a sequence of elementary transformations. The same holds for the reduced Aztec diamonds, see Figure~\ref{fig:shuffling}. We use this fact together with Proposition~\ref{prop:elem} to construct perfect t-embeddings~$\T((A_{n+1})^*)$. Denote by~$\T_{n}(j,k)$ the position of the perfect t-embedding of the vertex of~$(A_{n+1})^*$ indexed by~$(j,k)$. The following proposition gives a constructive way to define a sequence of perfect t-embeddings~$\T_n$ of reduced Aztec diamonds with weights~\eqref{two_per_weights}
%$2\times2$-periodic weights 
into a rhombus with diagonals of lengths $2$ and $2a$ using a recurrence relation.

The following proposition is an analogue of a similar result for uniformly weighted Aztec diamond~\cite[Proposition 2.4]{Ch-R} and for uniformly weighted tower graph~\cite[Proposition 4.1]{BNR23}.

\begin{proposition}\label{prop:t_rec} 
The perfect t-embedding~$\T_{n+1}(j,k)$ can be obtained from~$\T_{n}(j,k)$ using the following update rules%of the reduced Aztec diamond~$A'_{n+1}$ 
%is given by 
\begin{enumerate}
\item $\T_{n+1}(0,\pm(n+1)) =\pm \i a$ %, $\T_{n+1}(0,-(n+1)) = -\i a$, $\T_{n+1}(n+1,0) = 1$
 and $\T_{n+1}(\pm(n+1),0) = \pm 1$.
\item For $\{|j|, |k|\}=  \{0,n\}$
\begin{equation*}
\T_{n+1}(\pm n,0)=\frac{1}{\alpha_n+1}\Big(\T_n(\pm n,0)+\alpha_n\T_n(\pm(n-1),0)\Big),
\end{equation*}
\begin{equation*}
\T_{n+1}(0,\pm n)=\frac{1}{\alpha_n+1}\Big(\alpha_n\T_n(0,\pm n)+\T_n(0,\pm(n-1))\Big).
\end{equation*}
\item
For $1\leq j\leq n-1$, $|j|+|k|=n$, $\{|j|, |k|\} \neq \{0,n\}$
\[
\T_{n+1}(j,\pm(n-j))
=\frac{1}{\beta_{j,n}+1}\Big(\T_n(j-1,\pm(n-j))+\beta_{j,n}\T_n(j,\pm(n-j-1))\Big).
\]
For $-(n-1)\leq j\leq -1$,  $|j|+|k|=n$
\[
\T_{n+1}(j,\pm(n+j))
=\frac{1}{\beta_{j,n}+1}\Big(\beta_{j,n}\T_n(j,\pm(n+j-1))+\T_n(j+1,\pm(n+j))\Big).
\]
\item For $|j|+|k|<n$ and $j+k+n$ even, $\T_{n+1}(j,k)=\T_n(j,k)$.
\item For $|j|+|k|<n$ and $j+k+n$ odd,
\begin{multline*}
\T_{n+1}(j,k)+\T_n(j,k)=\frac{1}{\gamma_{j,k,n}+1}\Big(\T_{n+1}(j-1,k)+\T_{n+1}(j+1,k) \\
+\gamma_{j,k,n}\left(\T_{n+1}(j,k+1))+\T_{n+1}(j,k-1)\right)\Big).
\end{multline*}
\end{enumerate}
Where the coefficients are given by
%\begin{enumerate}
%\item
\begin{equation*}
\alpha_n=
\begin{cases}
1, & n \text{ odd}, \\
a^2, & n=4m+2,\\
a^{-2}, & n=4m.
\end{cases}
\end{equation*}
%\item 
\begin{equation*}
\beta_{j,n}=
\begin{cases}
1, & n \text{ odd},\\
a^2, &n=4m \text{ and } j \text{ even} \quad \text{or} \quad n=4m+2 \text{ and } j \text{ odd}, \\
a^{-2}, & n=4m \text{ and } j \text{ odd} \quad \text{or} \quad n=4m+2 \text{ and } j \text{ even}.
\end{cases}
\end{equation*}
%\item 
\begin{equation*}
\gamma_{j,k,n}=
\begin{cases}
1, & n\text{ even},\\
a^2, & n=4m+3 \text{ and } j,k\text{ even} \quad \text{or}\quad n=4m+1 \text{ and } j,k\text{ odd}, \\
a^{-2}, & n=4m+3 \text{ and } j,k\text{ odd} \quad \text{or}\quad n=4m+1 \text{ and } j,k\text{ even}.
\end{cases}
\end{equation*}
%\end{enumerate}
\end{proposition}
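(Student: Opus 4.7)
The proof strategy is to use the shuffling algorithm, which realizes the passage from the reduced Aztec diamond $A'_n$ to $A'_{n+1}$ as a composition of the three elementary transformations of weighted bipartite graphs shown in Figure~\ref{fig:elem}, possibly preceded by a gauge transformation of edge weights. By Proposition~\ref{prop:elem}, each such transformation preserves the t-embedding and its origami map, so it suffices to determine how the vertex positions of $\T_n$ transform step by step and to match the resulting update formulas with items (1)--(5).

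For the interior (items (4) and (5)), the shuffling step is realized by spider moves performed at the faces of the intermediate graph indexed by $(j,k)$ with $j+k+n$ odd. Dual vertices corresponding to faces with $j+k+n$ even are fixed by every such spider move, which gives (4) directly. At each face $(j,k)$ with $j+k+n$ odd, the four dual neighbors $(j\pm 1,k)$, $(j,k\pm 1)$ are of the opposite type and hence have the same positions at steps $n$ and $n+1$; the spider-move identity for t-embeddings then expresses $\T_{n+1}(j,k)+\T_n(j,k)$ as a $\gamma_{j,k,n}$-weighted average of these four positions, where $\gamma_{j,k,n}$ is the face weight at $(j,k)$ in the intermediate graph. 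The case distinction for $\gamma_{j,k,n}$ simply records the four possible assignments of weights $a$ or $1$ to the two pairs of opposite edges around the quadrilateral, determined by the parities of $j$, $k$, and $n$.

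For the boundary (items (1)--(3)), the normalization $\T_{n+1}(\pm(n+1),0)=\pm 1$ and $\T_{n+1}(0,\pm(n+1))=\pm\i a$ follows from the convention that the outer face of $\T_{n+1}$ is the prescribed tangential rhombus with diagonals $2$ and $2a$. The remaining new boundary vertices arise from merging parallel edges (move (1) of Figure~\ref{fig:elem}) and from contracting degree-$2$ vertices whose two incident edges have been equalized by the initial gauge step (move (2) of Figure~\ref{fig:elem}). In both cases the new vertex of $(A_{n+1})^*$ lies on the segment between two consecutive vertices of $(A_n)^*$, and its precise location is the weighted average with weights proportional to the two geometric edge lengths, i.e.\ to the corresponding edge weights; the coefficients $\alpha_n$ and $\beta_{j,n}$ record exactly these weight ratios for the two-periodic family, with the corner case (2) differing slightly from the generic boundary case (3) because of which two edges survive the reduction.

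The main obstacle will be the careful combinatorial bookkeeping through the first (gauge) step of the reduction: one must verify that, after multiplying all edges incident to $V_{\operatorname{gauge}}(A_{n+1})$ by $a$, the degree-$2$ vertices slated for contraction indeed carry equal weights on their two edges, and that the resulting face weights at the interior faces of $A'_{n+1}$ match precisely the coefficients $\gamma_{j,k,n}$ stated above. This forces a case split according to $n\bmod 4$ and to the parities of $j$ and $k$, which is why each of $\alpha_n$, $\beta_{j,n}$, and $\gamma_{j,k,n}$ is piecewise-defined. Once this bookkeeping is carried out, the update rules (1)--(5) follow from the elementary spider-move and contraction formulas for t-embedding vertex positions recalled in Section~\ref{sec:t-emb}.
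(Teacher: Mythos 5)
Your proposal is correct and follows essentially the same route as the paper: the passage $A'_{n+1}\to A'_{n+2}$ is decomposed into boundary edge splittings (giving rules (2)--(3) as division points in proportion to the edge weights), spider moves at faces with $j+k+n$ odd (giving rules (4)--(5) via the central-move formula with face weight $\gamma_{j,k,n}$), followed by contractions and a gauge step that do not move dual vertices, all tracked through Proposition~\ref{prop:elem}. The only cosmetic difference is that you phrase the boundary step in the reduction direction (merging/contracting) rather than the paper's splitting direction, which yields the identical weighted-average formula.
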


\begin{proof} 
The proof mimics the proof of \cite[Proposition 2.4]{Ch-R}. The only difference is that in our setup the edge weights are $(2\times 2)$-periodic instead of uniform ones. 

\begin{figure}
 \begin{center}
\includegraphics[width=0.9\textwidth]{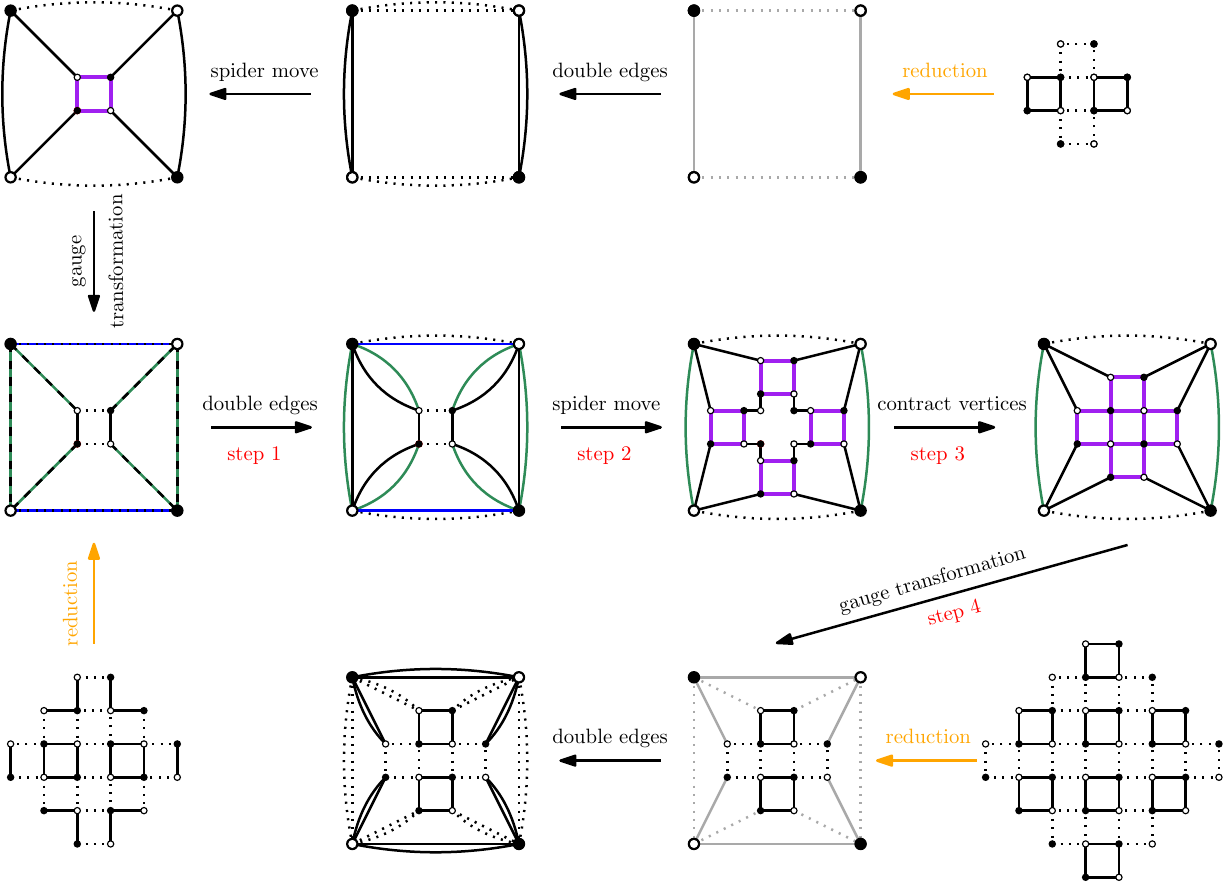}
  \caption{Weights of black dotted edges are~$a$, of black ones are~$1$, of grey dotted edges are~$2a$, and of grey edges are~$2$, of green ones are~$a^2$ and of  blue ones are~$a^3$; weights of blue edges with black dots are~$a^3+a$ and of `green and black' edges are~$a^2+1$. Purple colour corresponds to intermediate weights obtained after spider moves. The steps to get~$A'_{3+1}$ from~$A'_{2+1}$ emphasised in red.}\label{fig:shuffling}
 \end{center}
\end{figure}

Note that given the symmetries of %($2\times2$-periodicaly weighted) 
the two-periodic Aztec diamond, we can choose the initial data in such a way that $\T_n(0,0)=0$ and all t-embeddings are symmetric with respect to the origin. Then the boundary condition~(1) corresponds to the fact that $A'_{1+1}$ is a square with horizontal edges of weight~$2a$ and vertical ones of weight~$2$ and the geometric weights given by the t-embedding are gauge equivalent to the initial ones.

To get a reduced Aztec diamond~$A'_{n+2}$ from~$A'_{n+1}$ one should make four steps, see Figure~\ref{fig:shuffling}. First, one should split all edges adjacent to at least one of the four boundary vertices into pairs of parallel edges. Let $e$ be an edge adjacent to at least one of boundary vertices.
%~$e_1, e_2$ with weights~$\nu_{e_1}, \nu_{e_2}$. 
Let $(j_1,k_1)$ and $(j_2,k_2)$ be two faces adjacent to $e$. Let $e_1, e_2$ be two parallel edges obtained from $e$, such that after splitting, the edge $e_{s}$ is adjacent to the face $(j_{s},k_{s})$ for $s=1,2$. Then the edge weights~$\nu_{e_1}, \nu_{e_2}$ are given by:
\begin{itemize}
\item For $n=4m+1$,
\[
\begin{cases}
\nu_{e_1}=\nu_{e_2}=1 &\text{ if } (j_1,k_1)=(\pm (n-1),0) \text{ and } (j_2,k_2)=(\pm n,0)\\ 
\nu_{e_1}=\nu_{e_2}=a &\text{ if } (j_1,k_1)=(0, \pm (n-1)) \text{ and } (j_2,k_2)=(0, \pm n)\\
\end{cases},
\]
and
\[
\begin{cases}
\nu_{e_1}= \nu_{e_2}=1 &\text{ if } e \text{ is adjacent to a vertex of face } (j,k), \text{ with }  j \text{ even and }|j|+|k|=n-2 \\ 
\nu_{e_1}=\nu_{e_2}=a &\text{ if } e \text{ is adjacent to a vertex of face } (j,k), \text{ with }  j \text{ odd and }|j|+|k|=n-2
\end{cases}.
\]
\item For $n=4m+2$,
\[
\begin{cases}
\nu_{e_1}=1 \text{ and } \nu_{e_2}=a^2 &\text{ if } (j_1,k_1)=(\pm (n-1),0) \text{ and } (j_2,k_2)=(\pm n,0)\\ 
\nu_{e_1}=a^3 \text{ and } \nu_{e_2}=a &\text{ if } (j_1,k_1)=(0, \pm (n-1)) \text{ and } (j_2,k_2)=(0, \pm n)\\
\end{cases},
\]
and
\[
\begin{cases}
\nu_{e_1}=1 \text{ and } \nu_{e_2}=a^2 \quad \text{ if } j \text{ even} \\
\nu_{e_1}=a^2 \text{ and } \nu_{e_2}=1 \quad \text{ if } j \text{ odd} 
\end{cases},
\]
for 
$\begin{cases} (j_1,k_1)=(j\mp1,k)\\
(j_2,k_2)=(j, k\pm1)
\end{cases}$ $\mp j \pm k = n-2$
\quad or $\begin{cases}(j_1,k_1)=(j\pm1,k)\\
(j_2,k_2)=(j, k\pm1)\end{cases}$  $\pm j \pm k = n-2$. 
\end{itemize}
The weights for $n=4m$ (resp. $n=4m+3$) can be obtained by changing the roles of $j$ and $k$ in the case $n=4m+2$ (resp., $n=4m+1$), see Figure~\ref{fig_period}. 
Due to~\cite{KLRR}, this corresponds to adding the points dividing corresponding edges of $\T_n$ in proportions~$[\nu_{e_1}:\nu_{e_2}]$,
%$[1:\alpha_n^{\pm1}]$ and~$[1:\beta_{j,n}^{\pm1}]$, 
i.e. these are update rules~(2) and~(3).

The second step is to apply the spider move at all faces of $A'_{n+1}$ for which $j + k + n$ is odd.
Therefore, (4) reflects the fact that, for~$j + k + n$ even, the inner faces~$(j, k)$ of~$\T_n$ are not destroyed by spider moves and hence the positions of the corresponding dual vertices in the t-embedding~$\T_{n+1}$ remain the same as in~$\T_n$. While (5) corresponds to spider moves at faces for which $j + k + n$ is odd. Here we use the fact that the face weight at the face~$(j,k)$ of~$A'_{n+1}$ is equal to~$\gamma_{j,k,n}$, and the explicit description~\cite[Equation (6)]{KLRR} of the spider move (the \emph{central move}) in terms of the t-embedding.
%, which is called the central move in~\cite[Section 5]{KLRR}.

%(2) perform the urban renewal moves with those faces of A′n+1 for which j + k + n is odd as shown on Fig. 3; note that the new vertical and horizontal edges have weight 1 ; 2

The third step is the contraction of all vertices of degree~$2$ (note that after a spider move the weights on edges adjacent to such a vertices are all equal to~$1$). %; note that now all inner edges adjacent to boundary faces have weight~$1$,  . 
And the fourth step is a gauge transformation of weights. The existence of such a gauge transformation follows from the fact that the face weights stay the same.
To finish the proof, note that the last two steps %to get $A'_{n+2}$ from $A'_{n+1}$ 
do not affect positions of the dual vertices in the t-embedding. 
\end{proof}

\begin{remark}
Note that $a=1$ send us to the uniform setup, and the recurrence in~\cite{Ch-R} together with boundary conditions coincide with the one obtained above.
\end{remark}

\begin{remark}
The shuffling algorithm, used in the proof of the above proposition, applied to the Aztec diamond with any edge weights, implies that perfect t-embedding always exists for any weighted Aztec diamond. Moreover, similar to the above proposition, one can write a recurrence formula for a perfect t-embedding~$\T_n((A'_n)^*)$ of the form
\begin{multline*}
\T_{n+1}(j,k)+\T_n(j,k)=\frac{1}{c_{j,k,n}+1}\Big(\T_{n+1}(j-1,k-1)+\T_{n+1}(j+1,k+1) \\
+c_{j,k,n}\left(\T_{n+1}(j-1,k+1))+\T_{n+1}(j+1,k-1)\right)\Big).
\end{multline*}
However, identifying the coefficients~$c_{j,k,n}$ even in the case of general doubly periodic edge weights is a non-trivial task.
%, non-linear task 
While the recurrence relations appear difficult to use for theoretical analysis in broad generality -- beyond the connection to the octahedron equation discussed in this paper -- they remain valuable for simulations of finite-sized t-embeddings.
%{\mnote[What do we actually say with the rest of this remark? Maybe one should say something like: It would be interesting to see if the above relation could be used even when there is no relation to the octahedron equation.]} 
%This makes the subsequent analysis of this recurrence significantly more difficult. Instead, one can adopt the approach introduced in~\cite{BNR24, BNR25}. In particular,~\cite{BNR25} demonstrates how both the perfect t-embeddings and their origami maps of the Aztec diamond with arbitrary weights can be expressed in terms of the inverse Kasteleyn matrix,  without relying on the shuffling algorithm.  
\end{remark}

Let~$\Or_{n}(j,k)$ be the origami map of~$(A'_{n+1})^*$, where the root white face is the one adjacent to~$\T_n(n,0)$ and~$\T_n(0,n)$. Then, similar to~\cite{Ch-R}, the following holds.

\begin{proposition}\label{prop:o_rec}
For $n \geq 1$ the origami map $\Or_{n+1}$ can be constructed from $\Or_n$ using the same update rules (2)--(5) as in Proposition~\ref{prop:t_rec}, with boundary conditions 
\[ \Or_n(n,0) = \Or_n(-n,0) = 1\quad\text{ and } \quad\Or_n(0,n)=\Or_n(0,-n) = \i a.\]
\end{proposition}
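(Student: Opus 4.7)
The plan is to mirror the proof of Proposition~\ref{prop:t_rec}: proceed by induction on $n$, applying exactly the same sequence of elementary transformations relating $A'_{n+1}$ and $A'_{n+2}$ (edge splittings at the boundary; spider moves at faces with $j+k+n$ odd; then contractions and gauge transformations). The decisive input is Proposition~\ref{prop:elem}, which asserts that the origami map, like the t-embedding, is preserved under elementary transformations. This immediately forces each of the update rules (2)--(5) to hold verbatim for $\Or$. Rules (2) and (3) express that the new dual vertex produced by an edge split lies at the convex combination of its two endpoints in the proportion $[\nu_{e_1}:\nu_{e_2}]$; this combination is unaffected by folding because a single edge is never folded in the origami construction. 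Rule (4) records that inner vertices not touched by a spider move stay in place, which is trivially preserved by the preceding gauge and contraction steps. Rule (5) is the central move identity of~\cite{KLRR} for t-embeddings, and the same identity applies to any quantity that transforms consistently under elementary transformations, in particular to $\Or$.

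The only content beyond Proposition~\ref{prop:t_rec} lies in the boundary conditions. By the definition of the origami map, $\Or_n(v^*) = \T_n(v^*)$ for every vertex $v^*$ of the root white face. Since this face is adjacent to $\T_n(n,0) = 1$ and $\T_n(0,n) = \i a$, the identifications $\Or_n(n,0) = 1$ and $\Or_n(0,n) = \i a$ are immediate. The remaining two values $\Or_n(-n,0) = 1$ and $\Or_n(0,-n) = \i a$ require a short geometric check of the folding structure along the outer boundary. Concretely, every $\T_n$ has the same outer rhombus with corners $\pm 1, \pm \i a$ and the root white face is consistently chosen in the upper-right quadrant, so folding across the two boundary edges adjacent to $\i a$, respectively to $1$, sends the opposite corners $-1$ to $1$ and $-\i a$ to $\i a$. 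This identification is easiest to verify in the base case $n = 1$, where the augmented dual is just four triangles meeting at the origin; the same reflection argument then gives the analogous identification for all~$n$.

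The main obstacle is precisely this boundary analysis: one has to check that the reflections induced by folding along the four outer edges of the rhombus consistently pair up opposite corners for every $n$, and that this identification is compatible with the update rules (2) and (3) applied near the corners at each step of the shuffling. Both points follow from the invariance of the outer rhombus under the shuffling step and from the consistent choice of the root white face. Once this is secured, the remainder is a routine transcription of the proof of Proposition~\ref{prop:t_rec}, because rules (2)--(5) are literally the same in the two settings and the induction propagates the claimed boundary values automatically.
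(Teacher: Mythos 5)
Your proposal is correct and follows essentially the same route as the paper, which gives no independent argument for this proposition but simply notes that the proof of Proposition~\ref{prop:t_rec} carries over because, by Proposition~\ref{prop:elem}, the origami map is preserved under the same elementary transformations, so rules (2)--(5) apply verbatim and only the boundary values need separate attention. The one place where your sketch is looser than it should be is the verification of $\Or_n(-n,0)=1$ and $\Or_n(0,-n)=\i a$: the origami image of a far corner is obtained by composing the folds along the whole chain of interior edges from the root white face, not just by reflecting across two boundary edges of the rhombus; still, your fallback (verify the base case explicitly and propagate via the invariance of the outer rhombus and the consistent choice of root face) is exactly the check that is needed and matches what \cite{Ch-R} does in the uniform case.
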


%\subsection{Fundamental solutions}
The update rules (2), (3) and (5) from Proposition~\ref{prop:t_rec} can be rewritten in a more universal way. 
Let $\Lambda = \{(j,k,n)\in \ZZ^2\times \ZZ;j+k+n \text{ odd}\}.$
Given $(b_0,b_E,b_N,b_W,b_S) \in \CC^5 $ the following conditions define the function $f:\Lambda \to \CC$ uniquely. 
\begin{enumerate}
\item For all $(j,k)\in\mathbb{Z}^2$,
\begin{equation} \label{bc}
f(j,k,0)=0 \quad\text{ and }\quad f(j,k,-1)=0;
\end{equation}
%\item $f(0,0,1)=b_0;$ 
\item If $j+k+n$ is even
\begin{equation} \label{eq:rec}
\begin{split}
&f(j,k,n+1)+f(j,k,n-1)\\
 &- \frac{f(j-1,k,n)+f(j+1,k,n)+\tilde\gamma_{j,k,n}\big(f(j,k+1,n)+f(j,k-1,n)\big)}{\tilde\gamma_{j,k,n}+1}\\
&\quad=
\delta_{j,n}\delta_{k,0}\frac{1\cdot b_E}{\tilde\gamma_{j,k,n}+1}+
\delta_{j,-n}\delta_{k,0}\frac{1\cdot b_W}{\tilde\gamma_{j,k,n}+1}+
\delta_{j,0}\delta_{k,n}\frac{\tilde\gamma_{j,k,n} \cdot b_N}{\tilde\gamma_{j,k,n}+1}+
\delta_{j,0}\delta_{k,-n}\frac{\tilde\gamma_{j,k,n}\cdot b_S}{\tilde\gamma_{j,k,n}+1}\\
&\quad\quad+\delta_{j,0}\delta_{k,0}\delta_{n,1}\cdot b_0,
\end{split}
\end{equation}
\end{enumerate}
where the coefficients $\tilde\gamma_{j,k,n}$ are given by
\begin{equation}\label{eq:tilde_gamma}
\tilde\gamma_{j,k,n}=
\begin{cases}
1, & n\text{ odd},\\
a^2, & n=4m \text{ and } j,k\text{ even} \quad \text{or}\quad n=4m+2 \text{ and } j,k\text{ odd}, \\
a^{-2}, & n=4m \text{ and } j,k\text{ odd} \quad \text{or}\quad n=4m+2 \text{ and } j,k\text{ even}.
\end{cases}
\end{equation}

Note that $f(j, k, n) = 0$ if $n\leq 0$  or 
$\begin{cases} |j| + |k|\geq n\\
\{|j|, |k|\} \neq \{0,n\}
\end{cases}$.

\begin{remark}
Note that $\tilde\gamma_{j,k,n}=\gamma_{j,k,n-1}$, $\tilde\gamma_{j,n-j,n}=\beta_{j,n}$ and $\tilde\gamma_{0,\pm n,n}=\tilde\gamma_{\pm n,0,n}=\alpha^{-1}_{n}$. Therefore, $\T_n(j,k)$ can be seen as the solution~$f(j,k,n)$ of the above system with the boundary conditions~$(0,1,\i a, -1, -\i a)$. To see this we use that~$\T_{n+1}(j,k)=\T_n(j,k)$ for~$j+k+n$ even. Similarly, the origami map corresponds to the boundary conditions~$(0,1,\i a, 1, \i a)$. 
\end{remark}

\begin{remark}
In the `uniform case', i.e. for $a=1$, equation~\eqref{eq:rec} has the following form
\begin{equation} \label{eq:rec-unif}
\begin{split}
f&(j,k,n+1)+f(j,k,n-1)\\
 &- \frac12\big(f(j-1,k,n)+f(j+1,k,n)+f(j,k+1,n)+f(j,k-1,n)\big)\\
&\quad=\delta_{j,0}\delta_{k,0}\delta_{n,0}\cdot b_0 +
\frac12\big(\delta_{j,n}\delta_{k,0}\cdot b_E+
\delta_{j,-n}\delta_{k,0}\cdot b_W+
\delta_{j,0}\delta_{k,n}\cdot b_N+
\delta_{j,0}\delta_{k,-n}\cdot b_S\big).
\end{split}
\end{equation}
\end{remark}

Let $f_0$ be the fundamental solution of the system~\eqref{eq:rec} with initial conditions~\eqref{bc}, i.e. the solution with the boundary condition \[(b_0,b_E,b_N,b_W,b_S)=(1,0,0,0,0).\]  Let $f_E$, $f_N$, $f_W$ and $f_S$  be the solutions with boundary condition $(0,1,0,0,0)$ and so on. Then the solution to the equation with boundary conditions $(b_0,b_E,b_N,b_W,b_S)$ is given by
\begin{equation}
f=b_0f_0+b_Ef_E+b_Nf_N+b_Wf_W+b_Sf_S.
\end{equation} 
In particular, for $(j,k,n)\in \Lambda_+$
\begin{equation}\label{eq:t_n_f_e}
\T_n(j,k)=f_E(j,k,n)+\i{ a} f_N(j,k,n)-f_W(j,k,n)-\i { a} f_S(j,k,n)
\end{equation}
and
\begin{equation}\label{eq:o_n_f_e}
\Or_n(j,k)=f_E(j,k,n)+\i{ a} f_N(j,k,n)+f_W(j,k,n)+\i { a} f_S(j,k,n).
\end{equation}
%for $(j,k,n)\in \Lambda_+$.

The functions~$f_E$,~$f_W$,~$f_N$, and~$f_S$ can be expressed in terms of the fundamental solution~$f_0$. In the uniform setting, this was observed in~\cite{Ch-R}.
%Due to~\cite{Ch-R} we have the following lemma. 
\begin{lemma}[\cite{Ch-R}]\label{lem:f_e_unif} In the uniform case, the function $f_E$ can be expressed via fundamental solution~$f_0$ in the following way
\begin{equation}\label{eq:f_E_unif}
f_E(j,k,n)=\tfrac12\sum_{s=0}^{n+1} f_{0}(j-s,k,n-s).
\end{equation}
Similarly for $f_W$, $f_N$ and $f_S$ one has
\[f_W(j,k,n)=\tfrac12\sum_{s=0}^{n+1} f_{0}(j+s,k,n-s),\]
\[f_N(j,k,n)=\tfrac12\sum_{s=0}^{n+1} f_{0}(j,k-s,n-s) \quad\text{and}\quad 
f_S(j,k,n)=\tfrac12\sum_{s=0}^{n+1} f_{0}(j,k+s,n-s).
\]
\end{lemma}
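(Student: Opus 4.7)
The plan is to verify that the proposed formula defines a solution to the same linear system \eqref{eq:rec-unif}, \eqref{bc} that defines $f_E$, namely the system with boundary data $(0,1,0,0,0)$, and then invoke uniqueness. The key feature that makes the uniform case accessible is that for $a=1$ equation \eqref{eq:rec-unif} is translation invariant in all three variables $(j,k,n)$, so translates of the fundamental solution $f_0$ remain fundamental solutions, with correspondingly translated sources.

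Concretely, I would set $g_E(j,k,n) := \tfrac12 \sum_{s=0}^{n+1} f_0(j-s,\,k,\,n-s)$ and check that it solves the system for $f_E$. For each $s \geq 0$, the translated function $h_s(j,k,n) := f_0(j-s,k,n-s)$ solves \eqref{eq:rec-unif} with source $\delta_{j,s}\delta_{k,0}\delta_{n,s}$ by translation invariance, and with zero initial conditions at $n=s-1,s$. Since $f_0$ vanishes for $n\leq 0$, the summands with $s\geq n$ contribute nothing, so the cutoff $s=n+1$ is merely a harmless upper bound. Applying the linear operator on the left-hand side of \eqref{eq:rec-unif} to $g_E$ and exchanging it with the finite sum produces the source $\tfrac12 \sum_{s\geq 0} \delta_{j,s}\delta_{k,0}\delta_{n,s}$, which on the even-parity domain $\{j+k+n \text{ even}\}$ collapses to $\tfrac12 \delta_{j,n}\delta_{k,0}$ -- precisely the source defining $f_E$. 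The initial conditions $g_E(j,k,0) = \tfrac12\bigl(f_0(j,k,0) + f_0(j-1,k,-1)\bigr) = 0$ and $g_E(j,k,-1) = \tfrac12 f_0(j,k,-1) = 0$ then follow from the defining initial data of $f_0$, and uniqueness closes the argument. The formulas for $f_W$, $f_N$, and $f_S$ are obtained by the same superposition, with the shift performed in the $-j$, $+k$, and $-k$ directions respectively.

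I do not foresee any substantive obstacle -- translation invariance and linearity do essentially all the work. What is worth emphasising is how much the argument relies on $a=1$: in the two-periodic setting the coefficients $\tilde\gamma_{j,k,n}$ depend on the parity of $(j,k)$ (see \eqref{eq:tilde_gamma}), so a single translated copy of $f_0$ no longer solves the shifted recurrence, and the representation must split according to the parity classes of the source site. This is exactly why the analogous formulas in Theorem~\ref{thm} involve four density functions $\rho^{(\epsilon,\eta)}$ corresponding to the four residues of $(\epsilon,\eta) \bmod 2$, rather than a single $f_0$.
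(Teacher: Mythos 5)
Your proposal is correct and follows exactly the paper's argument: show that the right-hand side satisfies the uniform recurrence \eqref{eq:rec-unif} with boundary data $(0,1,0,0,0)$ and vanishes at $n=0$ and $n=-1$, then invoke uniqueness. The details you supply (translation invariance of the $a=1$ operator, the collapse of the summed sources to $\tfrac12\delta_{j,n}\delta_{k,0}$, and the harmlessness of the cutoff $s=n+1$) are precisely what the paper leaves implicit, and your closing remark about why the periodic case requires four shifted fundamental solutions matches the structure of Lemma~\ref{lem:f_E_repr}.
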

\begin{proof} 
Note that the RHS of~\eqref{eq:f_E_unif} satisfies Equation~\eqref{eq:rec-unif} for $(b_0,b_E,b_N,b_W,b_S)=(0,1,0,0,0)$ and vanishes at $n=0$ and $n=-1$. This conditions identify the function $f_E$ uniquely. The functions $f_N$, $f_W$, and $f_S$ are identified similarly.
\end{proof}

To get a similar representation for functions $f_E$, $f_W$, $f_N$ and $f_S$ in the $2\times2$-periodic setup we need to introduce four different `shifted' fundamental solutions. For $\epsilon, \eta \in \{0,1\}$, let us define four functions~$f_{(\epsilon,\eta)}:\Lambda\to \CC$  as a solutions of recurrence relation~\eqref{eq:rec} with a modified RHS. More precisely, the functions $f_{(\epsilon,\eta)}$ satisfy the initial conditions~\eqref{bc} and the following recurrence relation
\begin{equation} \label{eq:rec_fund}
\begin{split}
&f_{(\epsilon,\eta)}(j,k,n+1)+f_{(\epsilon,\eta)}(j,k,n-1)\\
 &- \frac{f_{(\epsilon,\eta)}(j-1,k,n)+f_{(\epsilon,\eta)}(j+1,k,n)+
 \tilde\gamma_{j,k,n}\big(f_{(\epsilon,\eta)}(j,k+1,n)+f_{(\epsilon,\eta)}(j,k-1,n)\big)}{\tilde\gamma_{j,k,n}+1}\\
 %=\theta^{(\epsilon,\eta)}_{j,k,n},
%where
%\[
%\theta^{(\epsilon,\eta)}_{j,k,n}=
&\quad\quad\quad\quad\quad\quad=
\begin{cases}
\delta_{j,0}\delta_{k,0}\delta_{n,0} &\text{ if } (\epsilon,\eta)=(0,0)\\
\delta_{j,1}\delta_{k,1}\delta_{n,0} &\text{ if } (\epsilon,\eta)=(1,1)\\
-\delta_{j,0}\delta_{k,1}\delta_{n,1}&\text{ if } (\epsilon,\eta)=(0,1)\\
-\delta_{j,1}\delta_{k,0}\delta_{n,1}&\text{ if } (\epsilon,\eta)=(1,0)
\end{cases}.
%\]
\end{split}
\end{equation}
Note that~$f_{(0,0)}$ is the fundamental solution~$f_0$ defined above, nevertheless we change the notation to have a similar notation for all four functions~$f_{(\epsilon,\eta)}$. 

\begin{lemma}\label{lem:f_E_repr}
Let $m=\lfloor\frac{n}{4}\rfloor$ and $f_{(\epsilon,\eta)}$ as defined above. Then for $n \geq -1$ the following holds
\begin{equation}\label{eq:f_E}
\begin{split}
f_E(j,k,n)=
&\tfrac{1}{1+a^{-2}}\sum_{s=0}^{m+1}f_{(0,0)} (j-4s, \,k,\, n-4s)\\
&\quad+\tfrac{1}{1+a^{2}}\sum_{s=0}^{m+1}f_{(1,1)} (j-4s-1,\, k+1, \, n-4s-2)\\
&\quad\quad-\tfrac{1}{2}\sum_{s=0}^{m+1}f_{(0,1)} (j-4s-3, \, k+1,\, n-4s-2)\\
&\quad\quad\quad-\tfrac{1}{2}\sum_{s=0}^{m+1}f_{(1,0)} (j-4s, \, k, \, n-4s);
\end{split}
\end{equation}

\begin{equation}\label{eq:f_W}
\begin{split}
f_W(j,k,n)=
&\tfrac{1}{1+a^{-2}}\sum_{s=0}^{m+1}f_{(0,0)} (j+4s, \, k, \, n-4s)\\
&\quad+\tfrac{1}{1+a^{2}}\sum_{s=0}^{m+1}f_{(1,1)} (j+4s+3, \, k+1, \, n-4s-2)\\
&\quad\quad-\tfrac{1}{2}\sum_{s=0}^{m+1}f_{(0,1)} (j+4s+3, \, k+1, \, n-4s-2)\\
&\quad\quad\quad-\tfrac{1}{2}\sum_{s=0}^{m+1}f_{(1,0)} (j+4s, \, k, \, n-4s);
\end{split}
\end{equation}

\begin{equation}\label{eq:f_S}
\begin{split}
f_S(j,k,n)=
&\tfrac{1}{1+a^{2}}\sum_{s=0}^{m+1}f_{(0,0)} (j, \, k+4s, \, n-4s)\\
&\quad+\tfrac{1}{1+a^{-2}}\sum_{s=0}^{m+1}f_{(1,1)} (j+1, \, k+4s+3, \, n-4s-2)\\
&\quad\quad-\tfrac{1}{2}\sum_{s=0}^{m+1}f_{(0,1)} (j+1, \, k+4s+3, \, n-4s-2)\\
&\quad\quad\quad-\tfrac{1}{2}\sum_{s=0}^{m+1}f_{(1,0)} (j-4s, \, k, \, n-4s);
\end{split}
\end{equation}

\begin{equation}\label{eq:f_N}
\begin{split}
f_N(j,k,n)=
&\tfrac{1}{1+a^{2}}\sum_{s=0}^{m+1}f_{ (0,0)} (j, \, k-4s, \, n-4s)\\
&\quad+\tfrac{1}{1+a^{-2}}\sum_{s=0}^{m+1}f_{(1,1)} (j+1, \, k-4s-1, \, n-4s-2)\\
&\quad\quad-\tfrac{1}{2}\sum_{s=0}^{m+1}f_{ (0,1)} (j+1, \, k-4s-3, \, n-4s-2)\\
&\quad\quad\quad-\tfrac{1}{2}\sum_{s=0}^{m+1}f_{ (1,0)}(j, \, k-4s, \, n-4s).
\end{split}
\end{equation}

\end{lemma}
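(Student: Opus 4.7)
The plan is to verify the identity \eqref{eq:f_E}, and then the three analogous identities \eqref{eq:f_W}, \eqref{eq:f_S}, \eqref{eq:f_N}, by the same uniqueness argument used in the uniform case, Lemma~\ref{lem:f_e_unif}. Namely, $f_E$ is determined uniquely by the initial conditions \eqref{bc} together with \eqref{eq:rec} for the choice $(b_0, b_E, b_N, b_W, b_S) = (0, 1, 0, 0, 0)$, so it suffices to check that the right-hand side of \eqref{eq:f_E} satisfies both of these conditions; the remaining three identities follow by the same method after using the obvious symmetries of the setup. The boundary conditions are immediate: every summand in \eqref{eq:f_E} is of the form $f_{(\epsilon,\eta)}(j', k', n - 4s - c)$ with $s \ge 0$ and $c \in \{0, 2\}$, and for $n \in \{-1, 0\}$ the third argument is non-positive, so each summand vanishes by \eqref{bc}.

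The core of the proof is to apply the discrete operator on the left-hand side of \eqref{eq:rec}, which I denote by $L$, term-by-term to the right-hand side of \eqref{eq:f_E}. The key structural observation is that each of the four space-time shifts appearing in \eqref{eq:f_E}, namely
\[
(j,k,n) \mapsto (j-4s, k, n-4s),\ (j-4s-1, k+1, n-4s-2),\ (j-4s-3, k+1, n-4s-2),\ (j-4s, k, n-4s),
\]
leaves the coefficient $\tilde\gamma_{j,k,n}$ from \eqref{eq:tilde_gamma} invariant. Since $\tilde\gamma_{j,k,n}$ depends only on the parities of $j$, $k$ and the residue of $n$ modulo $4$, this reduces to a short case check: the first and fourth shifts touch none of these data, while the second and third shifts simultaneously flip both parities of $j$ and $k$ and shift $n$ by $2$ modulo $4$---a combination that one verifies leaves each branch of \eqref{eq:tilde_gamma} unchanged. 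Moreover each shift preserves the parity of $j+k+n$, so the shifted recurrences sit on the same ``even'' sublattice as \eqref{eq:rec}. Consequently $L$ commutes with each of the four shifts.

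Applying $L$ and invoking \eqref{eq:rec_fund}, each of the four sums collapses to a sum of Kronecker deltas supported on the diagonal $\{(r, 0, r) : r \ge 0\}$: the four sums contribute at $r \equiv 0, 2, 3, 1 \pmod 4$ respectively, the $-\tfrac12$ prefactors in the third and fourth sums combining with the negative signs on the right-hand side of \eqref{eq:rec_fund} to produce positive delta contributions. Sorting by residue and matching against the target source $\delta_{j,n}\delta_{k,0}/(\tilde\gamma_{j,k,n} + 1)$ coming from \eqref{eq:rec}, the prefactors $\tfrac{1}{1+a^{-2}}$, $\tfrac{1}{1+a^2}$, $-\tfrac12$, $-\tfrac12$ are precisely what is required to reproduce $1/(\tilde\gamma_{r,0,r} + 1)$ on each residue class modulo $4$. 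The upper endpoint $s = m+1$ guarantees that every required delta along the diagonal is captured up to $r = n$, while any surplus terms vanish automatically thanks to the discrete light-cone support of each $f_{(\epsilon,\eta)}$. The main obstacle is this last bookkeeping step: once the invariance of $\tilde\gamma$ under the four shifts is in hand, matching the prefactors to the correct source values residue class by residue class is a finite case check, and the entirely analogous such check—with the appropriate reflections of the shifts—yields \eqref{eq:f_W}, \eqref{eq:f_S}, \eqref{eq:f_N}.
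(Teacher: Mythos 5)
Your overall strategy is the same as the paper's: characterize $f_E$ as the unique solution of \eqref{bc} and \eqref{eq:rec} with $(b_0,b_E,b_N,b_W,b_S)=(0,1,0,0,0)$, use the invariance of $\tilde\gamma$ under the four space-time shifts to commute the wave operator past each sum, check that the right-hand side vanishes for $n\in\{-1,0\}$, and then match the resulting Kronecker deltas against the boundary source. The shift-invariance of $\tilde\gamma$, the preservation of the even sublattice, the vanishing of the initial data, and the residue bookkeeping (the four sums produce deltas at $(r,0,r)$ with $r\equiv 0,2,3,1 \bmod 4$, respectively) are all correct.

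The gap is in the one step you declare rather than perform: the coefficient match on the even residue classes is false as you state it. At $r\equiv 0\pmod 4$ one has $\tilde\gamma_{r,0,r}=a^2$ by \eqref{eq:tilde_gamma}, so your target $1/(\tilde\gamma_{r,0,r}+1)$ equals $1/(1+a^{2})$, whereas the first sum contributes $\tfrac{1}{1+a^{-2}}=\tfrac{a^2}{1+a^2}$; the same mismatch, with the two values interchanged, occurs at $r\equiv2\pmod4$. What the right-hand side of \eqref{eq:f_E} actually produces at $(r,0,r)$ is $\tilde\gamma_{r,0,r}/(\tilde\gamma_{r,0,r}+1)=1/(1+\alpha_r)$, and this --- not $1/(\tilde\gamma_{r,0,r}+1)$ --- is the source that rule (2) of Proposition~\ref{prop:t_rec} forces, since the boundary value $\T_r(r,0)=1$ enters $\T_{r+1}(r,0)$ with coefficient $1/(\alpha_r+1)$. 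In other words, the numerator factors $1$ and $\tilde\gamma_{j,k,n}$ attached to the E/W versus N/S sources in \eqref{eq:rec} appear to be interchanged relative to Proposition~\ref{prop:t_rec}, and the lemma's coefficients are consistent with the corrected source rather than with \eqref{eq:rec} as printed. Since the entire content of the lemma lives in this finite coefficient check, asserting that the prefactors are ``precisely what is required'' without carrying out the check --- and naming the wrong required value on half the residue classes --- is a genuine gap: a reader following your recipe literally would conclude that the stated formula for $f_E$ is off by a swap of $\tfrac{1}{1+a^{2}}$ and $\tfrac{1}{1+a^{-2}}$. You need to either verify the source against the actual update rules of Proposition~\ref{prop:t_rec} (resolving the normalization of \eqref{eq:rec} along the way) or exhibit the residue-by-residue computation explicitly.
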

\begin{proof} Let us show that~\eqref{eq:f_E} holds. The representations of  $f_W$, $f_S$ and $f_N$ can be checked similarly.
Note that the LHS of~\eqref{eq:rec_fund} coincide with the LHS of~\eqref{eq:rec}. Note also, that 
\[\tilde\gamma_{j, \, k, \, n} = \tilde\gamma_{j+4s_1, \, k+4s_2, \, n+4s_3} 
\quad \text{ and } \quad
\tilde\gamma_{j, \, k, \, n} = \tilde\gamma_{j+2s_1+1, \, k+2s_2+1, \, n+4s_3+2}, \]
for all $s_1, s_2, s_3 \in \mathbb{Z}$.
Therefore, it remains to check that the RHS of~\eqref{eq:f_E} satisfies Equation~\eqref{eq:rec} for $(b_0,b_E,b_N,b_W,b_S)=(0,1,0,0,0)$ and vanishes at $n=0$ and $n=-1$, since this conditions identify the function $f_E$ uniquely. For $n=-1$ (resp., $n=0$) all sums in the RHS of~\eqref{eq:f_E} has exactly one (resp., two) terms, and all these terms have a non-positive last argument, hence the RHS of~\eqref{eq:f_E} vanishes at $n=0$ and $n=-1$. Finally, one can check, that the RHS of~\eqref{eq:rec_fund} implies that the RHS of~\eqref{eq:f_E} satisfies Equation~\eqref{eq:rec} for $(b_0,b_E,b_N,b_W,b_S)=(0,1,0,0,0)$.
\end{proof}

%\newpage
\section{T-embeddings and the octahedron equation}\label{sec:t-emb_octa}
In this section we show a connection between t-embeddings and the octahedron equation. More precisely, we show that the fundamental solution introduced in previous section has a probabilistic interpretation and the t-embedding itself can be written as a sum of density functions introduced in~\cite{DF-SG}. 

\subsection{The octahedron equation and dimer models}\label{sec:octa}
Let us first recall the setup of paper~\cite{DF-SG}. 

For $j, k, n \in\mathbb{Z}$, $n\geq 1$, $j+k+n$ odd, let $T^{\operatorname{oct}}_{j,k,n}\in\mathbb{R}$ be the solution of the \emph{octahedron recurrence} 
\begin{equation}\label{oct_rec}
T^{\operatorname{oct}}_{j,k,n+1}T^{\operatorname{oct}}_{j,k,n-1}=
T^{\operatorname{oct}}_{j+1,k,n}T^{\operatorname{oct}}_{j-1,k,n}+
T^{\operatorname{oct}}_{j,k+1,n}T^{\operatorname{oct}}_{j,k-1,n},
\end{equation}
with  the initial condition
 \begin{equation}\label{oct_rec_bc}
T^{\operatorname{oct}}_{j, k, n_{jk}} = t_{j,k},
\end{equation}
where $ t_{j,k}\in\mathbb{R}$ and $n_{jk}=j+k+1 \mod 2$. The solution $T^{\operatorname{oct}}_{j,k,n}$ is known to be related to the partition function of the dimer model on the Aztec diamond. 

Let~$A_{j,k,n}$ be the Aztec diamond of size~$n-1$ centered at~$(j,k)\in\mathbb{Z}^2$, i.e. it has faces~$(\epsilon, \eta)$ such that~$|\epsilon-j|+|\eta-k|<n-1$. Let us also define the closure of $A_{j,k,n}$ by
\[\bar A_{j,k,n}=\{(\epsilon, \eta) \text{ such that } |\epsilon-j|+|\eta-k|\leq n-1\}.\]
Given~$t_{\epsilon, \eta}\in\mathbb{R}_{+}$ with~$\epsilon, \eta\in\mathbb{Z}$, let us define a weight function on edges of the Aztec diamond~$A_{j,k,n}$ by 
\begin{equation}\label{edge_weights}
\hat\nu_e=\frac{1}{t_{\epsilon_e,\eta_e}t_{\epsilon'_e,\eta'_e}},
\end{equation}
where $(\epsilon_e,\eta_e), (\epsilon'_e,\eta'_e) \in \bar A_{j,k,n}$ are faces adjacent to the edge $e$. 
Then, following~\cite{DF-SG}, the weight of each dimer configuration $M$ can be written as 
\[\prod_{e\in M} \hat\nu_e
=\prod_{(\epsilon, \eta)\in \bar A_{j,k,n}} (t_{\epsilon, \eta})^{-\mathcal D_{\epsilon, \eta}} 
= \prod_{(\epsilon, \eta)\in \bar A_{j,k,n}} \frac{1}{t_{\epsilon, \eta}}\prod_{(\epsilon, \eta)\in \bar A_{j,k,n}} (t_{\epsilon, \eta})^{1-\mathcal D_{\epsilon, \eta}} ,\]
where $\mathcal D_{\epsilon, \eta}$ is the number of dimers around the face $(\epsilon, \eta)$ in the dimer configuration $M$ and $\mathcal M$ is the set of all possible dimer configurations. %of Aztec~$A_{j,k,n}$. 
Similarly, the partition function of the dimer model on Aztec diamond~$A_{j,k,n}$ can be written as
\[Z_{j,k,n}:=\sum_{M\in\mathcal{M}} \prod_{e\in M} \hat\nu_e=
\prod_{(\epsilon, \eta)\in \bar A_{j,k,n}} \frac{1}{t_{\epsilon, \eta}}
\left(\sum_{M\in\mathcal{M}}
\prod_{(\epsilon, \eta)\in \bar A_{j,k,n}} (t_{\epsilon, \eta})^{1-\mathcal D_{\epsilon, \eta}}
\right).\]

\begin{remark} The weight of a dimer configuration defined in~\cite{DF-SG} differ by $\left(\prod\limits_{(\epsilon, \eta)\in \bar A_{j,k,n}} {t_{\epsilon, \eta}}\right)$ from the one defined above. However, the probability measure stays the same.
\end{remark}

The following proposition is a version of~\cite[Theorem 2.2]{DF-SG} with the extra product of $t_{\epsilon, \eta}$'s taken into account.

\begin{proposition}[\cite{Sp, DF}]\label{T=Z}
The solution $T^{\operatorname{oct}}_{j,k,n}$ of the octahedron recurrence~\eqref{oct_rec} with the initial condition ~\eqref{oct_rec_bc} and the partition function~$Z_{j,k,n}$ of the dimer model on the Aztec diamond~$A_{j,k,n}$ with edge weights defined by~\eqref{edge_weights} satisfy
\[T^{\operatorname{oct}}_{j,k,n}=
Z_{j,k,n} \prod_{(\epsilon, \eta)\in \bar A_{j,k,n}} {t_{\epsilon, \eta}}.
\]
\end{proposition}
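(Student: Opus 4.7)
The plan is to deduce Proposition~\ref{T=Z} as a direct reformulation of the classical Speyer--Di~Francesco theorem, adjusted for the normalization used here. The key input I would invoke is \cite[Theorem 2.2]{DF-SG} (based on \cite{Sp, DF}), which states that, in the normalization of that paper, the solution of~\eqref{oct_rec}--\eqref{oct_rec_bc} admits the Laurent expansion
\[
T^{\operatorname{oct}}_{j,k,n} \;=\; \sum_{M \in \mathcal{M}} \prod_{(\epsilon,\eta) \in \bar A_{j,k,n}} t_{\epsilon,\eta}^{\,1 - \mathcal{D}_{\epsilon,\eta}(M)},
\]
where $\mathcal{D}_{\epsilon,\eta}(M)$ is the number of dimers of $M$ incident to face $(\epsilon,\eta)$.

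Starting from this identity, I would factor the summand exactly as in the computation displayed between the definition of $\hat\nu$ and the proposition statement: $\prod t_{\epsilon,\eta}^{\,1-\mathcal{D}_{\epsilon,\eta}} = \bigl(\prod_{\bar A_{j,k,n}} t_{\epsilon,\eta}\bigr)\cdot \prod_{(\epsilon,\eta)} t_{\epsilon,\eta}^{-\mathcal{D}_{\epsilon,\eta}}$. Since each dimer contributes one unit to the $\mathcal D$-count of each of its two adjacent faces, the second product telescopes from faces to edges, giving $\prod_{e \in M} \hat\nu_e$, precisely the edge weight~\eqref{edge_weights}. Pulling the $M$-independent factor out of the sum leaves $\bigl(\prod_{\bar A_{j,k,n}} t_{\epsilon,\eta}\bigr)\, Z_{j,k,n}$, which is the claimed identity.

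The main obstacle is of course the Speyer--Di~Francesco theorem itself; once it is on hand, the remaining work is the elementary factorization just outlined. A self-contained alternative I would have in reserve is induction on $n$ using~\eqref{oct_rec}, reducing the inductive step to Kuo's condensation identity $Z_{j,k,n+1}Z_{j,k,n-1} = Z_{j+1,k,n}Z_{j-1,k,n} + Z_{j,k+1,n}Z_{j,k-1,n}$ for the weighted dimer model on Aztec diamonds. The one nontrivial check in that route is geometric: one must verify that the three multisets $\bar A_{j,k,n+1}\sqcup\bar A_{j,k,n-1}$, $\bar A_{j+1,k,n}\sqcup\bar A_{j-1,k,n}$ and $\bar A_{j,k+1,n}\sqcup\bar A_{j,k-1,n}$ coincide, so that the common factor $\prod_{(\epsilon,\eta)} t_{\epsilon,\eta}$ divides out evenly from all three terms of the octahedron relation; this is a short comparison of $|\epsilon-j|+|\eta-k|$ ranges, after which the inductive step collapses to Kuo condensation.
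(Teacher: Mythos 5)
Your proposal is correct and follows essentially the same route as the paper: the paper presents this proposition as a restatement of the Speyer--Di~Francesco result (\cite[Theorem 2.2]{DF-SG}), with the normalization adjustment carried out via exactly the factorization $\prod_{e\in M}\hat\nu_e=\prod_{(\epsilon,\eta)}t_{\epsilon,\eta}^{-\mathcal D_{\epsilon,\eta}}=\bigl(\prod t_{\epsilon,\eta}^{-1}\bigr)\prod t_{\epsilon,\eta}^{1-\mathcal D_{\epsilon,\eta}}$ displayed just before the statement. Your backup route via Kuo condensation is a reasonable self-contained alternative but is not what the paper does.
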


Following~\cite{DF-SG}, for $(\epsilon, \eta)\in \mathbb{Z}^2, n\geq 0$ let us define the density function $\rho^{(\epsilon, \eta)}_{j,k,n}$ by
\begin{equation}\label{eq:def_rho}
\rho^{(\epsilon, \eta)}_{j,k,n}=t_{\epsilon, \eta}
\partial_{t_{\epsilon, \eta}}
\log{T^{\operatorname{oct}}_{j,k,n}}|_{t_{a,b}=t^*_{a,b}}.
\end{equation}

The reason why it is called the density function in~\cite{DF-SG} is clarified in the following remark. 

\begin{remark}\label{rem:density_function}
Given~\eqref{eq:def_rho} and Proposition~\ref{T=Z}, one can easily check that for any $(\epsilon, \eta)\in\bar{A}_{j,k,n}$
\begin{equation}\label{eq:rho_prob}
\text{ for } n\geq 1 \quad\quad
\rho^{(\epsilon, \eta)}_{j,k,n}=\EE_{j,k,n}\left[1-\mathcal D_{\epsilon, \eta}\right],
\end{equation}
where $\mathcal D_{\epsilon, \eta}$ is a number of dimers around face $(\epsilon, \eta)$ in a random dimer configuration of the Aztec diamond~$A_{j,k,n}$. Indeed, note that 
\begin{multline*}
t_{\epsilon, \eta}
\partial_{t_{\epsilon, \eta}}
\Log{T^{\operatorname{oct}}_{j,k,n}}\\
=t_{\epsilon, \eta}\frac{\partial_{t_{\epsilon, \eta}}T^{\operatorname{oct}}_{j,k,n}}{T^{\operatorname{oct}}_{j,k,n}} 
=
t_{\epsilon, \eta}\frac{\sum\limits_{\operatorname{conf}}\partial_{t_{\epsilon, \eta}}\prod\limits_{(x, y)} (t_{x, y})^{1-\mathcal D_{x, y}}}{Z_{j,k,n} \prod\limits_{(x, y)%\in \bar A_{j,k,n}
} {t_{x, y}}}=
\sum\limits_{M\in \mathcal M}\left(1-\mathcal D_{\epsilon, \eta}\right)\frac{ \prod\limits_{(x, y)} \frac{1}{t_{x,y}}\prod\limits_{(x, y)} (t_{x, y})^{1-\mathcal D_{x, y}} }{Z_{j, k ,n}},
\end{multline*}
where the product is over all $(x, y)\in \bar A_{j,k,n}$. 
\end{remark}

Note that the contribution of an extra product of~$t_{\epsilon, \eta}$'s completely disappeared here and our density function coincides with the one defined in~\cite{DF-SG}.

Due to~\cite{DF-SG}, for $n\geq1$, the density function $\rho^{(\epsilon, \eta)}_{j,k,n}$ satisfies the following recurrence relation:
\begin{equation}\label{eq:rho_rec}
\rho^{(\epsilon, \eta)}_{j,k,n+1}+\rho^{(\epsilon, \eta)}_{j,k,n-1}=
L_{j,k,n}\left(\rho^{(\epsilon, \eta)}_{j+1,k,n}+\rho^{(\epsilon, \eta)}_{j-1,k,n}\right)
+R_{j,k,n}\left(\rho^{(\epsilon, \eta)}_{j,k+1,n}+\rho^{(\epsilon, \eta)}_{j,k-1,n}\right),
\end{equation}
where 
\begin{equation}\label{eq:coefficients_R_L}
R_{j,k,n}=\frac{T^{\operatorname{oct}}_{j+1,k,n}T^{\operatorname{oct}}_{j-1,k,n}}{T^{\operatorname{oct}}_{j,k,n+1}T^{\operatorname{oct}}_{j,k,n-1}} \quad\text{and}\quad L_{j,k,n}=1-R_{j,k,n}.
\end{equation}
%\[R_{j,k,n}=\frac{T^{\operatorname{oct}}_{j+1,k,n}T^{\operatorname{oct}}_{j-1,k,n}}{T^{\operatorname{oct}}_{j,k,n+1}T^{\operatorname{oct}}_{j,k,n-1}} \quad\text{and}\quad L_{j,k,n}=1-R_{j,k,n}.\]

\subsection{Uniform initial conditions}
%{Uniform setup} 
In the uniform case, the initial data is given by $t_{j,k}=1$. Indeed, in this case, the solution of the T-system $T^{\operatorname{oct}}_{j,k,n}=2^{n(n-1)/2}$ coincides with the partition function of the uniform dimer model of an Aztec diamond of size $n$. The recurrence relation in the uniform case is the following
\begin{equation}\label{eq:rho_rec_unif}
\rho^{(\epsilon, \eta)}_{j,k,n+1}+\rho^{(\epsilon, \eta)}_{j,k,n-1}=
\frac12\left(\rho^{(\epsilon, \eta)}_{j+1,k,n}+\rho^{(\epsilon, \eta)}_{j-1,k,n}
+\rho^{(\epsilon, \eta)}_{j,k+1,n}+\rho^{(\epsilon, \eta)}_{j,k-1,n}\right).
\end{equation}

\begin{proposition}\label{prop:fund_prob}
The density function~$\rho^{(0, 0)}_{j,k,n}$ extended to~$n=-1$ by~$\rho^{(0, 0)}_{j,k,-1}=0$ is a fundamental solution of the recurrence equation~\eqref{eq:rec-unif}.
\end{proposition}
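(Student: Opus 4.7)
The plan is to exploit the uniqueness implicit in the definition of the fundamental solution: $f_0$ is the unique function on $\Lambda$ satisfying both the initial conditions~\eqref{bc} and the recurrence~\eqref{eq:rec-unif} with source datum $(b_0,b_E,b_N,b_W,b_S)=(1,0,0,0,0)$. I would therefore verify each of these three pieces of data directly for $\rho^{(0,0)}_{j,k,n}$, extended by $\rho^{(0,0)}_{j,k,-1}=0$. Since the recurrence~\eqref{eq:rho_rec_unif} for $\rho^{(0,0)}$ is already in hand for $n\ge 1$, the only real work is computing the boundary values of $\rho^{(0,0)}$ at $n=0$ and $n=1$ from the definition of the density function and matching the $n=0$ source term in~\eqref{eq:rec-unif}.

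For the initial conditions, when $j+k$ is odd we have $n_{jk}=0$, so the octahedron initial datum gives $T^{\operatorname{oct}}_{j,k,0}=t_{j,k}$, and hence
\[
\rho^{(0,0)}_{j,k,0} \;=\; t_{0,0}\,\partial_{t_{0,0}}\log t_{j,k}\bigr|_{t=1} \;=\; 0,
\]
using that $(j,k)\neq(0,0)$ whenever $j+k$ is odd. Combined with the declared extension $\rho^{(0,0)}_{j,k,-1}=0$, this matches~\eqref{bc}.

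For the recurrence, I would split into two regimes. For $n\ge 1$ with $j+k+n$ even, every delta on the right of~\eqref{eq:rec-unif} vanishes (the boundary $b$'s are zero, and $n\neq 0$); what remains is exactly the rearrangement of~\eqref{eq:rho_rec_unif}, so nothing further needs to be checked. For $n=0$ with $j+k$ even, the left-hand side of~\eqref{eq:rec-unif} collapses, via $f(j,k,-1)=0$ and the previously established vanishing of $f$ at the four level-$0$ neighbors (all of which have odd coordinate sum), to the single identity $f(j,k,1)=\delta_{j,0}\delta_{k,0}$. But for $j+k$ even we have $n_{jk}=1$, so $T^{\operatorname{oct}}_{j,k,1}=t_{j,k}$, and therefore
\[
\rho^{(0,0)}_{j,k,1} \;=\; t_{0,0}\,\partial_{t_{0,0}}\log t_{j,k}\bigr|_{t=1} \;=\; \delta_{j,0}\delta_{k,0},
\]
which is exactly the source identity needed. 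The only delicate point — and therefore the nominal ``main obstacle'' — is the parity bookkeeping: one must observe that the source of~\eqref{eq:rec-unif} placed at $(0,0,0)$ constrains the \emph{level-one} value of the solution, and that this level-one value of $\rho^{(0,0)}$ is supplied by the octahedron initial datum at $n_{00}=1$ rather than by the octahedron recurrence itself.
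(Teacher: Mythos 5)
Your proof is correct and follows essentially the same route as the paper: pin down the values at $n=-1,0,1$ and let the recurrence~\eqref{eq:rho_rec_unif} (valid for $n\geq 1$) together with uniqueness do the rest. The only cosmetic difference is that you compute $\rho^{(0,0)}_{j,k,0}=0$ and $\rho^{(0,0)}_{j,k,1}=\delta_{j,0}\delta_{k,0}$ directly from the definition~\eqref{eq:def_rho} and the flat initial datum $T^{\operatorname{oct}}_{j,k,n_{jk}}=t_{j,k}$, whereas the paper reads these same values off from the probabilistic interpretation~\eqref{eq:rho_prob}.
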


\begin{proof}
Using a probabilistic meaning of the density function given in~\eqref{eq:rho_prob} one can easily check that~$\rho^{(0, 0)}_{j,k,0}=0$ and~$\rho^{(0, 0)}_{j,k,1}=\delta_{j,0}\delta_{k,0}$.  Now the initial data $\rho^{(0, 0)}_{j,k,-1}=0$ together with~\eqref{eq:rho_rec_unif} implies that~$\rho^{(0, 0)}_{j,k,n}$ is a fundamental solution of the recurrence equation~\eqref{eq:rec-unif}.
\end{proof}

\begin{corollary} The previous proposition implies that~\eqref{eq:f_E_unif} can be rewritten as 
\[
f_E(j,k,n)=\tfrac12\sum_{s=0}^{n-1} \rho^{(0, 0)}_{j-s,k,n-s}
\]
and similarly for $f_N, f_W$ and $f_S$.
\end{corollary}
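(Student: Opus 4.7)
The proof is a direct substitution combined with a vanishing-tails observation, and I would present it as follows.

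The plan is to apply Lemma~\ref{lem:f_e_unif}, which represents $f_E$ in terms of the fundamental solution $f_0$ of~\eqref{eq:rec-unif} with boundary data $(b_0, b_E, b_N, b_W, b_S) = (1,0,0,0,0)$, and then to invoke Proposition~\ref{prop:fund_prob} to replace $f_0$ by the density function $\rho^{(0,0)}$. Since both $f_0$ and the extension of $\rho^{(0,0)}$ (with $\rho^{(0,0)}_{j,k,-1}:=0$) are fundamental solutions of the same linear recurrence with the same initial data, the uniqueness statement embedded in the setup of~\eqref{eq:rec}--\eqref{bc} forces $f_0(j,k,n) = \rho^{(0,0)}_{j,k,n}$ for every $(j,k,n) \in \Lambda$ and $n \geq -1$.

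Substituting this identification into Lemma~\ref{lem:f_e_unif} yields
\[
f_E(j,k,n)=\tfrac{1}{2}\sum_{s=0}^{n+1}\rho^{(0,0)}_{j-s,\,k,\,n-s}.
\]
Next, I would trim the sum. The initial data~\eqref{bc} give $f_0(\cdot,\cdot,0) = f_0(\cdot,\cdot,-1) = 0$, equivalently $\rho^{(0,0)}_{j,k,0} = 0$ (which also follows from the probabilistic interpretation~\eqref{eq:rho_prob}, since the number of dimers around any face in an Aztec diamond of size $0$ is zero in a vacuous sense --- more directly, the boundary datum in the proof of Proposition~\ref{prop:fund_prob} is $\rho^{(0,0)}_{j,k,0}=0$) and $\rho^{(0,0)}_{j,k,-1}=0$ by definition of the extension. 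Hence the terms corresponding to $s=n$ and $s=n+1$ drop out, and the upper limit of the sum can be replaced by $n-1$, giving the claimed formula
\[
f_E(j,k,n)=\tfrac{1}{2}\sum_{s=0}^{n-1}\rho^{(0,0)}_{j-s,\,k,\,n-s}.
\]

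Finally, the analogous statements for $f_N$, $f_W$, and $f_S$ are obtained by the same argument: Lemma~\ref{lem:f_e_unif} expresses each of them as a $\tfrac{1}{2}$-weighted telescoping sum of shifts of the same fundamental solution $f_0$, so the replacement $f_0 \leftrightarrow \rho^{(0,0)}$ and the same truncation of the two vanishing tail terms yield the three parallel identities. There is no real obstacle here --- the content of the corollary lies entirely in Proposition~\ref{prop:fund_prob}, and the present statement is the immediate cosmetic consequence.
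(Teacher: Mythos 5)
Your proposal is correct and follows exactly the route the paper intends: identify $f_0$ with $\rho^{(0,0)}$ via the uniqueness of the fundamental solution established in Proposition~\ref{prop:fund_prob}, substitute into Lemma~\ref{lem:f_e_unif}, and drop the $s=n$ and $s=n+1$ terms since $\rho^{(0,0)}_{j,k,0}=0$ and $\rho^{(0,0)}_{j,k,-1}=0$. The paper leaves this as immediate, and your write-up supplies precisely the missing details.
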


\subsection{The~$(2\times2)$-periodic initial condition}
%{The case of $2\times2$-periodic weights}
 Following~\cite{DF-SG} let us define 
\begin{equation}\label{tjk}
t_{j,k}=
\begin{cases}
a_{\operatorname{oct}} & \text{ if } j=0,\, k=0\mod 2\\
b_{\operatorname{oct}} & \text{ if } j=1,\, k=1\mod 2\\
c_{\operatorname{oct}} & \text{ if } j=0,\, k=1\mod 2\\
d_{\operatorname{oct}} & \text{ if } j=1,\, k=0\mod 2,
\end{cases}
\end{equation} 
where $a_{\operatorname{oct}}, b_{\operatorname{oct}}, c_{\operatorname{oct}}, d_{\operatorname{oct}}\in\mathbb{R}.$
Note that $t_{j+2,k}=t_{j,k}$ and $t_{j,k+2}=t_{j,k}$. 

In this case,~\cite[Lemma 3.1]{DF-SG} says that~$T^{\operatorname{oct}}_{j,k,n}$ can be rewritten in the following form.
%Due to~\cite[Lemma 3.1]{DF-SG}, in this case, $T^{\operatorname{oct}}_{j,k,n}$ can be rewritten in the following form.
\begin{lemma}[\cite{DF-SG}]\label{lem:DF-SG}
Let~$T^{\operatorname{oct}}_{j,k,n}$ be the solution to~\eqref{oct_rec} with initial conditions~\eqref{tjk}, then  
%In $2\times2$-periodic case
\[T^{\operatorname{oct}}_{j,k,n}=
\left(\frac{a^2_{\operatorname{oct}}+b^2_{\operatorname{oct}}}{c_{\operatorname{oct}}d_{\operatorname{oct}}}\right)
^{\lfloor\frac{n}{2}\rfloor\lfloor\frac{n+1}{2}\rfloor}
\left(\frac{c^2_{\operatorname{oct}}+d^2_{\operatorname{oct}}}{a_{\operatorname{oct}}b_{\operatorname{oct}}}\right)
^{\lfloor\frac{n-1}{2}\rfloor\lfloor\frac{n}{2}\rfloor}
\times
\begin{cases}
t_{j,k} &\text{ if } n=0,1 \mod 4\\
t_{j+1,k+1} &\text{ if } n=2,3 \mod 4
\end{cases}.
\]
%where $t_{j,k}$ are as in~\eqref{tjk}. 
The coefficients~$L_{j,k,n}$ from \eqref{eq:coefficients_R_L} are given by
\[
L_{j,k,n}=
\left(\frac{a^2_{\operatorname{oct}}+b^2_{\operatorname{oct}}}{c_{\operatorname{oct}}d_{\operatorname{oct}}}\right)
^{\lfloor\frac{n}{2}\rfloor-\lfloor\frac{n+1}{2}\rfloor}
\left(\frac{c^2_{\operatorname{oct}}+d^2_{\operatorname{oct}}}{a_{\operatorname{oct}}b_{\operatorname{oct}}}\right)
^{\lfloor\frac{n-1}{2}\rfloor-\lfloor\frac{n}{2}\rfloor}
\times\tau_{j,k,n},
\]
where
\[
\tau_{j,k,n}=
\begin{cases}
(t_{j+1,k}t_{j-1,k}) / (t_{j,k}t_{j+1,k+1}) &\text{ if } n=0,1 \mod 4\\
(t_{j+2,k+1}t_{j,k+1}) / (t_{j,k}t_{j+1,k+1}) &\text{ if } n=2,3 \mod 4.
\end{cases}
\]
\end{lemma}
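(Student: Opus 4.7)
The plan is to prove the closed-form expression for $T^{\operatorname{oct}}_{j,k,n}$ by induction on $n$, and to deduce the formula for $L_{j,k,n}$ as a direct consequence. Let me abbreviate $\alpha=(a_{\operatorname{oct}}^2+b_{\operatorname{oct}}^2)/(c_{\operatorname{oct}}d_{\operatorname{oct}})$, $\beta=(c_{\operatorname{oct}}^2+d_{\operatorname{oct}}^2)/(a_{\operatorname{oct}}b_{\operatorname{oct}})$ and $p_n=\lfloor n/2\rfloor\lfloor(n+1)/2\rfloor$, so that the ansatz reads $T^{\operatorname{oct}}_{j,k,n}=\alpha^{p_n}\beta^{p_{n-1}}\tilde t(j,k,n)$, with $\tilde t(j,k,n)=t_{j,k}$ when $n\equiv 0,1\pmod 4$ and $\tilde t(j,k,n)=t_{j+1,k+1}$ otherwise. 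The base cases $n=0,1$ are immediate: $p_0=p_{-1}=0$ and $\tilde t(j,k,0)=\tilde t(j,k,1)=t_{j,k}$, which reproduces the initial condition~\eqref{oct_rec_bc}.

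For the inductive step I substitute the ansatz into~\eqref{oct_rec}. A short calculation yields the arithmetic identities
\[
p_{n+1}+p_{n-1}-2p_n=\begin{cases}1 & n\text{ odd},\\ 0 & n\text{ even},\end{cases}\qquad p_n+p_{n-2}-2p_{n-1}=\begin{cases}0 & n\text{ odd},\\ 1 & n\text{ even},\end{cases}
\]
so after cancelling the common prefactor $\alpha^{2p_n}\beta^{2p_{n-1}}$, the octahedron equation collapses to the scalar identity
\[
\alpha\,\tilde t(j,k,n+1)\tilde t(j,k,n-1)=\tilde t(j+1,k,n)\tilde t(j-1,k,n)+\tilde t(j,k+1,n)\tilde t(j,k-1,n)
\]
when $n$ is odd, and to the analogous identity with $\alpha$ replaced by $\beta$ when $n$ is even. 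Since the recurrence forces $j+k+n$ to be even, splitting into the four residues of $n\bmod 4$ and using the $2\times 2$-periodicity of $t_{\cdot,\cdot}$, the right-hand side collapses in each case to exactly $a_{\operatorname{oct}}^2+b_{\operatorname{oct}}^2$ (for $n$ odd) or $c_{\operatorname{oct}}^2+d_{\operatorname{oct}}^2$ (for $n$ even), while the left-hand side equals $\alpha\cdot c_{\operatorname{oct}}d_{\operatorname{oct}}$ respectively $\beta\cdot a_{\operatorname{oct}}b_{\operatorname{oct}}$. The identity then holds by the very definitions of $\alpha$ and $\beta$.

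With the closed form for $T^{\operatorname{oct}}_{j,k,n}$ in hand, the expression for $L_{j,k,n}$ follows by rewriting $L_{j,k,n}=1-R_{j,k,n}$ using the octahedron equation itself as $L_{j,k,n}=T^{\operatorname{oct}}_{j,k+1,n}T^{\operatorname{oct}}_{j,k-1,n}/(T^{\operatorname{oct}}_{j,k,n+1}T^{\operatorname{oct}}_{j,k,n-1})$ and inserting the ansatz. The same arithmetic identities reduce the $\alpha,\beta$ exponents to $\lfloor n/2\rfloor-\lfloor(n+1)/2\rfloor$ and $\lfloor(n-1)/2\rfloor-\lfloor n/2\rfloor$ respectively, and the residual ratio of $\tilde t$'s collapses to $\tau_{j,k,n}$ once the $2\times 2$-periodic shifts appropriate to $n\bmod 4$ are absorbed.

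The main obstacle will be purely bookkeeping: the eight microcases indexed by $n\bmod 4$ together with the parity of $j$ each require careful tracking of whether $\tilde t$ carries the $(+1,+1)$ shift. I do not expect any genuine algebraic difficulty, because everything ultimately reduces to the two defining relations $\alpha\,c_{\operatorname{oct}}d_{\operatorname{oct}}=a_{\operatorname{oct}}^2+b_{\operatorname{oct}}^2$ and $\beta\,a_{\operatorname{oct}}b_{\operatorname{oct}}=c_{\operatorname{oct}}^2+d_{\operatorname{oct}}^2$.
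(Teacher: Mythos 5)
The paper offers no proof of this lemma (it is quoted from~\cite{DF-SG}), so there is nothing to compare against; your plan of verifying the closed form by induction is the natural argument. The first half of your proposal is correct: the exponent identities $p_{n+1}+p_{n-1}-2p_n=\mathbf{1}_{n\text{ odd}}$ and $p_n+p_{n-2}-2p_{n-1}=\mathbf{1}_{n\text{ even}}$ are right, and in each residue class of $n\bmod 4$ (using that $j+k+n$ is even and the $2\times2$-periodicity of $t$) the scalar identity does reduce to $\alpha\,c_{\operatorname{oct}}d_{\operatorname{oct}}=a_{\operatorname{oct}}^2+b_{\operatorname{oct}}^2$ or $\beta\,a_{\operatorname{oct}}b_{\operatorname{oct}}=c_{\operatorname{oct}}^2+d_{\operatorname{oct}}^2$. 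That part goes through.

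The deduction of the $L_{j,k,n}$ formula, however, does not. Writing $L=1-R$ with $R$ as in~\eqref{eq:coefficients_R_L} gives, as you say, $L_{j,k,n}=T^{\operatorname{oct}}_{j,k+1,n}T^{\operatorname{oct}}_{j,k-1,n}/(T^{\operatorname{oct}}_{j,k,n+1}T^{\operatorname{oct}}_{j,k,n-1})$, whose residual $t$-ratio for $n\equiv 0,1\pmod 4$ is $t_{j,k+1}t_{j,k-1}/(t_{j,k}t_{j+1,k+1})$. The lemma's $\tau_{j,k,n}$ is instead $t_{j+1,k}t_{j-1,k}/(t_{j,k}t_{j+1,k+1})$, i.e.\ the $j$-direction ratio; no periodic shift identifies $t_{j,k+1}$ with $t_{j+1,k}$ when $j+k$ is even (one is $c_{\operatorname{oct}}$, the other $d_{\operatorname{oct}}$). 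Concretely, for $n\equiv 0\pmod 4$ and $j,k$ even your route yields $c_{\operatorname{oct}}^2/(c_{\operatorname{oct}}^2+d_{\operatorname{oct}}^2)$, while the lemma asserts $d_{\operatorname{oct}}^2/(c_{\operatorname{oct}}^2+d_{\operatorname{oct}}^2)$, so the last sentence of your argument ("collapses to $\tau_{j,k,n}$") is false as written. The source of the trouble is an $L\leftrightarrow R$ labelling inconsistency in the paper itself: for the recurrence~\eqref{eq:rho_rec} to follow from logarithmic differentiation of~\eqref{oct_rec}, the coefficient of $\rho^{(\epsilon,\eta)}_{j\pm1,k,n}$ must be $T^{\operatorname{oct}}_{j+1,k,n}T^{\operatorname{oct}}_{j-1,k,n}/(T^{\operatorname{oct}}_{j,k,n+1}T^{\operatorname{oct}}_{j,k,n-1})$ — which is what the lemma's $\tau$ encodes and what the later computation of $L=1/(1+\tilde\gamma)$ requires — so the two displayed definitions in~\eqref{eq:coefficients_R_L} are swapped. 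Your proof should either flag this and prove the formula for the $j$-direction ratio directly from the ansatz (which is immediate), or it will prove the complementary coefficient; as it stands, the final identification is an algebraic error that a spot-check on one microcase would have exposed.
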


The following lemma shows that in $2\times2$-periodic case for some special choice of the parameters~$a_{\operatorname{oct}}, b_{\operatorname{oct}}, c_{\operatorname{oct}}, d_{\operatorname{oct}}$
the recurrence relation~\eqref{eq:rho_rec} coincide with~\eqref{eq:rec}.

\begin{lemma} 
Let $a$ be a positive real number and~$\tilde \gamma_{i,j,k}$ be given by~\eqref{eq:tilde_gamma}. Assume also that the real numbers $a_{\operatorname{oct}}, b_{\operatorname{oct}}, c_{\operatorname{oct}}, d_{\operatorname{oct}}$ satisfy 
\[a_{\operatorname{oct}}=b_{\operatorname{oct}} \quad \text{ and }\quad c_{\operatorname{oct}}=ad_{\operatorname{oct}}.\] 
Then in the setup of Lemma~\ref{lem:DF-SG} one has $L_{i,j,k}=\frac{1}{1+\tilde\gamma_{i,j,k}}$.
%For $a_{\operatorname{oct}}=b_{\operatorname{oct}}$ and $c_{\operatorname{oct}}=ad_{\operatorname{oct}}$ one has $L_{i,j,k}=\frac{1}{1+\tilde\gamma_{i,j,k}}$. {\mnote[where are $a$ and $\tilde \gamma_{i,j,k}$ defined?]}
\end{lemma}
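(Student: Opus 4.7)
The plan is a direct verification by substituting the hypotheses into the formula for $L_{j,k,n}$ given in Lemma~\ref{lem:DF-SG} and checking that every case of the piecewise definitions matches. The assumptions $a_{\operatorname{oct}}=b_{\operatorname{oct}}$ and $c_{\operatorname{oct}}=a\,d_{\operatorname{oct}}$ immediately reduce the two prefactor ratios to
$$
\frac{a_{\operatorname{oct}}^{2}+b_{\operatorname{oct}}^{2}}{c_{\operatorname{oct}}d_{\operatorname{oct}}}=\frac{2\,a_{\operatorname{oct}}^{2}}{a\,d_{\operatorname{oct}}^{2}},\qquad
\frac{c_{\operatorname{oct}}^{2}+d_{\operatorname{oct}}^{2}}{a_{\operatorname{oct}}b_{\operatorname{oct}}}=\frac{(a^{2}+1)\,d_{\operatorname{oct}}^{2}}{a_{\operatorname{oct}}^{2}}.
$$
A short floor-function calculation shows that $\lfloor n/2\rfloor-\lfloor(n+1)/2\rfloor$ equals $0$ for $n$ even and $-1$ for $n$ odd, while $\lfloor(n-1)/2\rfloor-\lfloor n/2\rfloor$ equals $-1$ for $n$ even and $0$ for $n$ odd. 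Hence $L_{j,k,n}$ reduces to $\tfrac{a_{\operatorname{oct}}^{2}}{(a^{2}+1)\,d_{\operatorname{oct}}^{2}}\,\tau_{j,k,n}$ when $n$ is even and to $\tfrac{a\,d_{\operatorname{oct}}^{2}}{2\,a_{\operatorname{oct}}^{2}}\,\tau_{j,k,n}$ when $n$ is odd.

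Next I would compute $\tau_{j,k,n}$ case by case using that, under the hypotheses, $t_{j,k}=a_{\operatorname{oct}}$ whenever $j+k$ is even, $t_{j,k}=a\,d_{\operatorname{oct}}$ when $j$ is even and $k$ odd, and $t_{j,k}=d_{\operatorname{oct}}$ when $j$ is odd and $k$ even. Since $L_{j,k,n}$ enters the density recurrence only for $j+k+n$ even, there are eight cases, indexed by $n\bmod 4$ together with the parities of $j,k$ compatible with that residue. For example, when $n\equiv 0\pmod 4$ and $j,k$ are both even, one reads off $\tau_{j,k,n}=d_{\operatorname{oct}}^{2}/a_{\operatorname{oct}}^{2}$, so the prefactor yields $L_{j,k,n}=1/(a^{2}+1)$, matching $\tilde\gamma_{j,k,n}=a^{2}$; when instead $j,k$ are both odd one gets $\tau_{j,k,n}=a^{2}d_{\operatorname{oct}}^{2}/a_{\operatorname{oct}}^{2}$, hence $L_{j,k,n}=a^{2}/(a^{2}+1)=1/(1+a^{-2})$, matching $\tilde\gamma_{j,k,n}=a^{-2}$.

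The analogous computation for $n\equiv 2\pmod 4$ essentially exchanges the roles of the two parity patterns of $(j,k)$, because Lemma~\ref{lem:DF-SG} uses the shifted formula $\tau_{j,k,n}=t_{j+2,k+1}t_{j,k+1}/(t_{j,k}t_{j+1,k+1})$ in this range; the same two numerical outcomes reappear with the appropriate $\tilde\gamma_{j,k,n}\in\{a^{2},a^{-2}\}$. For $n$ odd one has $\tilde\gamma_{j,k,n}=1$ and the target is $L_{j,k,n}=1/2$; both admissible parity patterns of $(j,k)$ yield $\tau_{j,k,n}=a_{\operatorname{oct}}^{2}/(a\,d_{\operatorname{oct}}^{2})$, which combined with the odd-$n$ prefactor gives exactly $1/2$. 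The only obstacle is purely clerical: keeping the $n\bmod 4$ bookkeeping consistent between the two piecewise formulas for $\tau_{j,k,n}$ in Lemma~\ref{lem:DF-SG} and the three-branch definition of $\tilde\gamma_{j,k,n}$ in~\eqref{eq:tilde_gamma}, together with the restriction $j+k+n$ even that cuts the sixteen naive parity combinations down to the eight that actually occur.
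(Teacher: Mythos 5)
Your proposal is correct and follows essentially the same route as the paper: a direct case-by-case substitution of the hypotheses into the formula for $L_{j,k,n}$ from Lemma~\ref{lem:DF-SG}, reducing the prefactors and evaluating $\tau_{j,k,n}$ for each admissible parity pattern, then matching against the three branches of $\tilde\gamma$ in~\eqref{eq:tilde_gamma}. The paper merely compresses this into a single displayed case analysis of $L_{i,j,k}$ in terms of $a_{\operatorname{oct}},b_{\operatorname{oct}},c_{\operatorname{oct}},d_{\operatorname{oct}}$, whereas you spell out the floor-function and $\tau$ computations explicitly; the content is identical.
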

\begin{proof}
Note that 
\begin{equation*}
L_{i,j,k} =
\begin{cases}
a^2_{\operatorname{oct}}/(a^2_{\operatorname{oct}}+b^2_{\operatorname{oct}})&\text{ or }\quad
b^2_{\operatorname{oct}}/(a^2_{\operatorname{oct}}+b^2_{\operatorname{oct}}),  \qquad k\text{ odd},\\
d^2_{\operatorname{oct}}/(c^2_{\operatorname{oct}}+d^2_{\operatorname{oct}}), & \quad k=4m \text{ and } i,j\text{ even} \quad \text{or}\quad k=4m+2 \text{ and } i,j\text{ odd}, \\
c^2_{\operatorname{oct}}/(c^2_{\operatorname{oct}}+d^2_{\operatorname{oct}}), &\quad k=4m \text{ and } i,j\text{ odd} \quad \text{or}\quad k=4m+2 \text{ and } i,j\text{ even}.
\end{cases}
\end{equation*}
Therefore $L_{i,j,k}=\frac{1}{1+\tilde\gamma_{i,j,k}}$ for $a_{\operatorname{oct}}=b_{\operatorname{oct}}$ and $c_{\operatorname{oct}}=a d_{\operatorname{oct}}$.
\end{proof}

\begin{remark}\label{rem:weights}
Let $t_{j,k}$ be given by~\eqref{tjk} with $a_{\operatorname{oct}}=b_{\operatorname{oct}}=c_{\operatorname{oct}}=1$ and $d_{\operatorname{oct}}=1/a$. 
Let~$e$ be an edge adjacent to a face~$(j,k)$ with~$j+k$ odd, then
\begin{equation}\label{two_per_weights_oct}
\hat\nu_e=
\begin{cases}
a,  &j \text{ odd}, \\
1, &j \text{ even},
\end{cases}
\end{equation}
where~$\nu_e$ is defined by~\eqref{edge_weights}. Note that this weights do not depend on the size of the Aztec diamond, however the weights~$\nu_e$ defined in~\eqref{two_per_weights} do.
\end{remark}

%{\mnote Dimer model edge weights! \eqref{two_per_weights} VS \eqref{edge_weights}}

%{\mnote For $a_{\operatorname{oct}}=b_{\operatorname{oct}}=c_{\operatorname{oct}}=1$ and $d_{\operatorname{oct}}=1/a$ they coinside UP TO A SHIFT only for $n=1,2 \mod 4$.}

Let us now formulate an analogue of Proposition~\ref{prop:fund_prob} for the $(2\times2)$-periodic case. Recall the definitions of~$f_{(\epsilon, \eta)}$ and~$\rho^{(\epsilon, \eta)}_{j,k,n}$ in~\eqref{eq:rec_fund} and~\eqref{eq:def_rho}. 

\begin{proposition}\label{prop:shift_fund_prob}
In the setup described above with~~$a_{\operatorname{oct}}=b_{\operatorname{oct}}$ and~$c_{\operatorname{oct}}=ad_{\operatorname{oct}}$ the following holds
\begin{enumerate}
\item for $j+k+n$ odd, $\eta \in \{0,1\}$, $n\geq0$
\[f_{(\eta,\eta)}(j, k, n)=\rho^{(\eta, \eta)}_{j,k,n};\]
\item for $j+k+n$ odd, $\epsilon,\eta \in \{0,1\}$, $\epsilon\neq\eta$, $n\geq1$
\[f_{(\epsilon,\eta)}(j, k, n)=\rho^{(\epsilon, \eta)}_{j,k,n}.\]
\end{enumerate}
\end{proposition}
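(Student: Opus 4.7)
The strategy is uniqueness: both $f_{(\epsilon,\eta)}$ and $\rho^{(\epsilon,\eta)}$ will be shown to satisfy the same second-order recurrence in $n$, with matching values on two consecutive $n$-slices. The common recurrence then propagates the equality to all larger $n$.

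First, I would identify the two recurrences as essentially the same. The preceding lemma gives $L_{j,k,n}=(1+\tilde\gamma_{j,k,n})^{-1}$; combined with the relation $L_{j,k,n}+R_{j,k,n}=1$ from \eqref{eq:coefficients_R_L}, this forces $R_{j,k,n}=\tilde\gamma_{j,k,n}(1+\tilde\gamma_{j,k,n})^{-1}$, so that the density recurrence \eqref{eq:rho_rec} has precisely the same left-hand side as \eqref{eq:rec_fund} but with right-hand side zero. Thus $\rho^{(\epsilon,\eta)}$ satisfies the homogeneous version of \eqref{eq:rec_fund} on its natural domain, while $f_{(\epsilon,\eta)}$ satisfies it with its specific delta source.

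Next, I would compute the two "initial slices" of $\rho^{(\epsilon,\eta)}$ directly from \eqref{eq:def_rho}. Since $T^{\operatorname{oct}}_{j,k,n_{jk}}=t_{j,k}$ by \eqref{oct_rec_bc}, and the initial data $t_{i,l}$ are treated as formally independent variables, one obtains $\rho^{(\epsilon,\eta)}_{j,k,n_{jk}}=\delta_{j,\epsilon}\delta_{k,\eta}$ and $\rho^{(\epsilon,\eta)}_{j,k,1-n_{jk}}=0$ on the opposite parity slice among $\{n=0,1\}$. A corresponding short calculation from \eqref{bc}, \eqref{eq:rec_fund}, and the placement of the source yields the same pair of values for $f_{(\epsilon,\eta)}$: in case (1), the source at $n=0$ produces $f_{(\eta,\eta)}(j,k,1)=\delta_{j,\eta}\delta_{k,\eta}$, matching $\rho$ on slices $n=0,1$; in case (2), $f_{(\epsilon,\eta)}(j,k,1)=0$, and the source at $n=1$ together with the $n=1$ instance of \eqref{eq:rec_fund} yields $f_{(\epsilon,\eta)}(j,k,2)=-\delta_{j,\epsilon}\delta_{k,\eta}$. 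The same value is produced by the homogeneous $\rho$-recurrence at $n=1$ fed with $\rho^{(\epsilon,\eta)}_{j,k,0}=\delta_{j,\epsilon}\delta_{k,\eta}$, so $f$ and $\rho$ match on slices $n=1,2$ as well.

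For $n$ beyond the support of the source ($n\ge 1$ in case (1), $n\ge 2$ in case (2)), both functions satisfy the same homogeneous second-order recurrence in $n$, so matching values on two consecutive slices propagate by induction on $n$. The closest thing to an obstacle is the parity bookkeeping: $\rho^{(\epsilon,\eta)}_{j,k,n}$ is supported on $\{j+k+n\text{ odd}\}$, and the parity of $\epsilon+\eta$ decides whether its nonzero initial slice sits at $n=0$ or $n=1$ — and this is exactly what selects the corresponding $\delta_{n,0}$ versus $\delta_{n,1}$ source in \eqref{eq:rec_fund}, with the correct sign (the defect at the later slice appears as $-\delta_{j,\epsilon}\delta_{k,\eta}$ because $\rho^{(\epsilon,\eta)}_{j,k,0}$ is carried across two steps of the recurrence with a minus sign).
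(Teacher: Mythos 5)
Your proof is correct and follows essentially the same route as the paper: the paper also verifies that the (zero-extended) density functions satisfy the sourced recurrence~\eqref{eq:rec_fund} together with the initial conditions~\eqref{bc}, and concludes by uniqueness. Your version simply makes explicit the computation of the initial slices $n=0,1$ (and the sign of the $\delta_{n,1}$ source in the $\epsilon\neq\eta$ case) that the paper leaves to the reader.
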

\begin{proof}
Let us introduce the functions $\widetilde\rho^{ \, (\epsilon, \eta)}_{j,k,n}$: For $j+k+n$ odd, $\epsilon, \eta \in \{0,1\}$ define
%\begin{enumerate}
%\item 
\[\widetilde\rho^{ \, (0,0)}_{j,k,n}:=
\begin{cases}
\rho^{(0, 0)}_{j,k,n}& \text{ if } n\geq 0\\
0 & \text{ if } n\leq-1
\end{cases};
\quad\quad\quad\quad
%\item For $j+k+n$ odd, $n\in \mathbb{R}$ define
\widetilde\rho^{ \, (1,1)}_{j,k,n}:=
\begin{cases}
\rho^{(1, 1)}_{j,k,n}& \text{ if } n\geq 0\\
0 & \text{ if } n\leq-1
\end{cases};
\]
%\item For $j+k+n$ odd, $n\in \mathbb{R}$ define
\[\widetilde\rho^{ \, (0,1)}_{j,k,n}:=
\begin{cases}
\rho^{(0, 1)}_{j,k,n}& \text{ if } n\geq 1\\
0 & \text{ if } n\leq0
\end{cases};
\quad\quad\quad\quad
%\item For $j+k+n$ odd, $n\in \mathbb{R}$ define
\widetilde\rho^{ \, (1,0)}_{j,k,n}:=
\begin{cases}
\rho^{(1, 0)}_{j,k,n}& \text{ if } n\geq 1\\
0 & \text{ if } n\leq0
\end{cases}.
\]
%\end{enumerate}
Note that functions~$\widetilde\rho^{ \, (\epsilon,\eta)}_{j,k,n}$ satisfy the recurrence relation~\eqref{eq:rec_fund} and the boundary conditions~\eqref{bc}. Consequently,~$\widetilde\rho^{ \, (\epsilon,\eta)}_{j,k,n}$ coincide with the shifted fundamental solution $f_{(\epsilon,\eta)}(j,k,n).$
\end{proof}
Combining the previous proposition with Remark~\ref{rem:density_function} we get the following corollary.
\begin{corollary}\label{cor:f=E}
For $n\geq1$ we have $f_{ \, (\epsilon, \eta)} (j,k,n)=\EE_{j,k,n}\left[1-\mathcal D_{\epsilon, \eta}\right]$.
\end{corollary}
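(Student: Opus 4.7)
The plan is to observe that the corollary is essentially a direct concatenation of two results already on the table, so the proof amounts to verifying the hypotheses align. First, I would invoke Proposition~\ref{prop:shift_fund_prob}, which, under the parameter choice $a_{\operatorname{oct}}=b_{\operatorname{oct}}$ and $c_{\operatorname{oct}}=ad_{\operatorname{oct}}$, identifies the shifted fundamental solution $f_{(\epsilon,\eta)}(j,k,n)$ with the density function $\rho^{(\epsilon,\eta)}_{j,k,n}$ on its natural range (i.e., $n\geq 0$ when $\epsilon=\eta$ and $n\geq 1$ when $\epsilon\neq\eta$). In particular, for every $n\geq 1$ and every $(\epsilon,\eta)\in\{0,1\}^2$, the two functions coincide.

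Next, I would apply the probabilistic identity \eqref{eq:rho_prob} from Remark~\ref{rem:density_function}, which states that $\rho^{(\epsilon,\eta)}_{j,k,n}=\mathbb{E}_{j,k,n}[1-\mathcal{D}_{\epsilon,\eta}]$ for any $(\epsilon,\eta)\in\bar{A}_{j,k,n}$ and $n\geq 1$. Concatenating these two equalities yields the claim.

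The only point to double-check is that the specific initial data fixed in Remark~\ref{rem:weights}, namely $a_{\operatorname{oct}}=b_{\operatorname{oct}}=c_{\operatorname{oct}}=1$ and $d_{\operatorname{oct}}=1/a$, satisfies the hypothesis $a_{\operatorname{oct}}=b_{\operatorname{oct}}$, $c_{\operatorname{oct}}=ad_{\operatorname{oct}}$ of Proposition~\ref{prop:shift_fund_prob}; this is immediate. One should also note that Remark~\ref{rem:density_function} formally requires $(\epsilon,\eta)\in\bar{A}_{j,k,n}$, but if this containment fails then $t_{\epsilon,\eta}$ does not appear in $T^{\operatorname{oct}}_{j,k,n}$, so $\rho^{(\epsilon,\eta)}_{j,k,n}=0$ and one also has $\mathbb{E}_{j,k,n}[1-\mathcal{D}_{\epsilon,\eta}]=0$ by the convention that faces outside $\bar{A}_{j,k,n}$ carry no dimers and contribute a trivial factor, so the identity continues to hold in a vacuous sense.

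There is no genuine obstacle here; the content is entirely in the two preceding results. Thus the proof is a one-line combination: $f_{(\epsilon,\eta)}(j,k,n)=\rho^{(\epsilon,\eta)}_{j,k,n}=\mathbb{E}_{j,k,n}[1-\mathcal{D}_{\epsilon,\eta}]$, valid for all $n\geq 1$.
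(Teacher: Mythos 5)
Your proposal is correct and follows exactly the paper's route: the corollary is stated as an immediate combination of Proposition~\ref{prop:shift_fund_prob} (identifying $f_{(\epsilon,\eta)}$ with $\rho^{(\epsilon,\eta)}_{j,k,n}$ for $n\geq 1$) with the probabilistic identity~\eqref{eq:rho_prob} from Remark~\ref{rem:density_function}. Your additional checks on the parameter choice and the case $(\epsilon,\eta)\notin\bar{A}_{j,k,n}$ are sensible but not part of the paper's (one-line) argument.
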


With Proposition~\ref{prop:shift_fund_prob} at hand, we are ready to prove Theorem~\ref{thm}.

%We are now ready to prove Theorem~\ref{thm}.

\begin{proof}[Proof of Theorem~\ref{thm}]
Combining equations~\eqref{eq:t_n_f_e} and~\eqref{eq:o_n_f_e} with Lemma~\ref{lem:f_E_repr} and Proposition~\ref{prop:shift_fund_prob} yields the result. 
%Corollary~\ref{cor:f=E} 
\end{proof}

\bibliographystyle{plain}
\bibliography{bibliotek}

\begin{thebibliography}{10}

\bibitem{BNR24}
Tomas Berggren, Matthew Nicoletti, and Marianna Russkikh.
\newblock Perfect t-embeddings and lozenge tilings.
\newblock {\em arXiv preprint arXiv:2408.05441}, 2024.

\bibitem{BNR23}
Tomas Berggren, Matthew Nicoletti, and Marianna Russkikh.
\newblock Perfect t-embeddings of uniformly weighted {A}ztec diamonds and tower
  graphs.
\newblock {\em Int. Math. Res. Not. IMRN}, (7):5963--6007, 2024.

\bibitem{BNR25}
Tomas Berggren, Matthew Nicoletti, and Marianna Russkikh.
\newblock Perfect t-embeddings of doubly periodic aztec diamonds.
\newblock {\em arXiv preprint arXiv:2508.04938}, 2025.

\bibitem{CLR2}
Dmitry Chelkak, Beno{\^\i}t Laslier, and Marianna Russkikh.
\newblock Bipartite dimer model: perfect t-embeddings and lorentz-minimal
  surfaces.
\newblock {\em arXiv preprint arXiv:2109.06272}, 2021.

\bibitem{CLR1}
Dmitry Chelkak, Beno\^it Laslier, and Marianna Russkikh.
\newblock Dimer model and holomorphic functions on t-embeddings of planar
  graphs.
\newblock {\em Proc. Lond. Math. Soc. (3)}, 126(5):1656--1739, 2023.

\bibitem{Ch-R}
Dmitry Chelkak and Sanjay Ramassamy.
\newblock Fluctuations in the {A}ztec diamonds via a space-like maximal surface
  in {M}inkowski 3-space.
\newblock {\em Confluentes Math.}, 16:1--17, 2024.

\bibitem{CJ16}
Sunil Chhita and Kurt Johansson.
\newblock Domino statistics of the two-periodic {A}ztec diamond.
\newblock {\em Adv. Math.}, 294:37--149, 2016.

\bibitem{CY14}
Sunil Chhita and Benjamin Young.
\newblock Coupling functions for domino tilings of {A}ztec diamonds.
\newblock {\em Adv. Math.}, 259:173--251, 2014.

\bibitem{DF13}
Philippe Di~Francesco.
\newblock An inhomogeneous lambda-determinant.
\newblock {\em Electron. J. Combin.}, 20(3):Paper 19, 34, 2013.

\bibitem{DF}
Philippe Di~Francesco.
\newblock {$T$}-systems, networks and dimers.
\newblock {\em Comm. Math. Phys.}, 331(3):1237--1270, 2014.

\bibitem{DFK09}
Philippe Di~Francesco and Rinat Kedem.
\newblock {$Q$}-systems as cluster algebras. {II}. {C}artan matrix of finite
  type and the polynomial property.
\newblock {\em Lett. Math. Phys.}, 89(3):183--216, 2009.

\bibitem{DF-SG}
Philippe Di~Francesco and Rodrigo Soto-Garrido.
\newblock Arctic curves of the octahedron equation.
\newblock {\em J. Phys. A}, 47(28):285204, 34, 2014.

\bibitem{DF-V}
Philippe Di~Francesco and Hieu~Trung Vu.
\newblock Arctic curves of the {$T$}-system with slanted initial data.
\newblock {\em J. Phys. A}, 57(33):335201, 57, 2024.

\bibitem{KV25}
David Keating and Hieu~Trung Vu.
\newblock Perfect t-embedding of uniformly weighted generalized tower graphs.
\newblock 2025.
\newblock To appear.

\bibitem{KLRR}
Richard Kenyon, Wai~Yeung Lam, Sanjay Ramassamy, and Marianna Russkikh.
\newblock Dimers and circle patterns.
\newblock {\em Ann. Sci. \'Ec. Norm. Sup\'er. (4)}, 55(3):865--903, 2022.

\bibitem{KNS94a}
Atsuo Kuniba, Tomoki Nakanishi, and Junji Suzuki.
\newblock Functional relations in solvable lattice models. {I}. {F}unctional
  relations and representation theory.
\newblock {\em Internat. J. Modern Phys. A}, 9(30):5215--5266, 1994.

\bibitem{KNS94b}
Atsuo Kuniba, Tomoki Nakanishi, and Junji Suzuki.
\newblock Functional relations in solvable lattice models. {II}.
  {A}pplications.
\newblock {\em Internat. J. Modern Phys. A}, 9(30):5267--5312, 1994.

\bibitem{Sp}
David~E. Speyer.
\newblock Perfect matchings and the octahedron recurrence.
\newblock {\em J. Algebraic Combin.}, 25(3):309--348, 2007.

\end{thebibliography}

\end{document}